\pdfoutput=1
\documentclass[11pt]{article}
\usepackage[letterpaper,margin=1in]{geometry}
\usepackage[utf8]{inputenc} % allow utf-8 input
\usepackage[T1]{fontenc}    % use 8-bit T1 fonts
\usepackage{url}            % simple URL typesetting
\usepackage{booktabs}       % professional-quality tables
\usepackage{amsfonts}       % blackboard math symbols
\usepackage{nicefrac}       % compact symbols for 1/2, etc.
\usepackage{microtype}      % microtypography
\usepackage{xcolor}         % colors
\usepackage{xspace}
\usepackage{graphicx}
\usepackage{algorithmic}

\usepackage[dvipsnames]{xcolor}

\newcommand{\Pro}{\textsf{PRO}\xspace}
\newcommand{\GFT}{\textsf{GFT}\xspace}
\newcommand{\OPT}{\textsf{OPT}\xspace}

\usepackage{nicefrac,bm,bbm}
\usepackage{amsmath,amsthm,color,colortbl,amssymb}

\usepackage{hyperref}

\usepackage{caption}
\usepackage{natbib}

\usepackage{wrapfig}
\usepackage{comment}

\usepackage{graphicx}
\usepackage{multirow}
\usepackage{xspace}
\usepackage{mathtools}
\usepackage{euscript}
\usepackage{thm-restate}
\usepackage{tikz}
\usepackage{subcaption}
\usepackage{tablefootnote,tabularx}
\usepackage[T1]{fontenc}    % use 8-bit T1 fonts

\DeclareMathOperator*{\argmax}{arg\,max}

\usepackage{pifont}

\usepackage{euscript}
\usepackage{mleftright}

\usepackage[capitalize]{cleveref}

\crefalias{AlgoLine}{line}%
\crefname{algocf}{Algorithm}{Algorithms}
\usepackage{algorithm}

\usepackage{cancel}
\usepackage{newpxtext}
\usepackage{newpxmath}
\definecolor{niceRed}{RGB}{190,38,38}
\definecolor{Red2}{RGB}{219, 50, 54}
\definecolor{mgreen}{RGB}{160, 200, 140}
\definecolor{blueGrotto}{RGB}{5,157,192}
\definecolor{limeGreen}{HTML}{81B622}
\definecolor{myellow}{rgb}{0.88,0.61,0.14}
\definecolor{darkGreen}{HTML}{2E8B57}
\definecolor{navyBlueP}{HTML}{03468F}
\definecolor{Sepia}{HTML}{7F462C}
\definecolor{red2}{HTML}{1F462C}
\definecolor{orange2}{HTML}{FF8000}
\definecolor{mgray}{HTML}{ABB3B8}
\definecolor{myPurple}{RGB}{175,0,124}
\definecolor{mypurple2}{rgb}{0.8,0.62,1}
\definecolor{royalBlue}{HTML}{057DCD}
\definecolor{mpink}{HTML}{FC6C85}
\theoremstyle{plain}
\newtheorem{theorem}{Theorem}[section]
\newtheorem{proposition}[theorem]{Proposition}

\newtheorem{lemma}[theorem]{Lemma}
\newtheorem{corollary}[theorem]{Corollary}
\theoremstyle{definition}
\newtheorem{definition}[theorem]{Definition}
\newtheorem{assumption}[theorem]{Assumption}

\theoremstyle{remark}
\newtheorem{remark}[theorem]{Remark}

\hypersetup{
	colorlinks = true,
	linkcolor = niceRed,
	citecolor = royalBlue,
	linktocpage = true,
	urlcolor = mpink
}

\usepackage{url}
\allowdisplaybreaks

\title{A Stronger Benchmark for Online Bilateral Trade:\\ From Fixed Prices to Distributions\footnote{Correspondence to: \{name.surname\}@polimi.it}}

\renewcommand{\eqref}[1]{(\ref{#1})}

\author{
	{Anna Lunghi}\and
	{Mattia Piccinato} \and
	{Matteo Castiglioni} \and 
    {Alberto Marchesi}   \and \\Politecnico di Milano 
}

\begin{document}
%\maketitle
%\input{abstract}
%\input{introduction}
%\input{related_works}
%\input{preliminaries}
%\input{robustness}
%\input{computation}
%\input{learning}
%\clearpage
%\bibliographystyle{plainnat}
%\bibliography{references}
%\newpage
%\appendix
%\input{appendix}

%%%%%%%%%%%
\maketitle
\begin{abstract}

We study online bilateral trade, where a learner facilitates repeated exchanges between a buyer and a seller to maximize the Gain From Trade (GFT), \emph{i.e.}, the social welfare. In doing so, the learner must guarantee not to subsidize the market. This constraint is usually imposed per round through Weak Budget Balance (WBB). Despite that, \citet{bernasconi2024no} show that a Global Budget Balance (GBB) constraint on the profit---enforced over the entire time horizon---can improve the GFT by a multiplicative factor of two. While this might appear to be a marginal relaxation, this implies that all existing WBB-focused algorithms suffer linear regret when measured against the GBB optimum.
In this work, we provide the first algorithm to achieve sublinear regret against the GBB benchmark in stochastic environments under one-bit feedback. In particular, we show that when the joint distribution of valuations has a bounded density, our algorithm achieves $\widetilde{\mathcal{O}}(T^{3/4})$ regret. Our result shows that there is no separation between the one-dimensional problem of learning the optimal WBB price and the two-dimensional problem of learning the optimal GBB distribution over \emph{pairs} of prices.

\end{abstract}

\section{Introduction}

\emph{Bilateral trade} is a fundamental problem in economics involving a broker and two rational agents: a seller and a buyer. In this setting, agents hold private valuations for a seller's item, and the broker’s objective is to design a mechanism that facilitates a transaction between them. Theoretically, the ideal mechanism is ex-post efficient, meaning that it maximizes social welfare by ensuring that a trade occurs whenever the buyer's valuation exceeds the seller's. However, the landmark result by~\citet{myerson1983efficient} establishes that no ``economically viable'' mechanism can guarantee ex-post efficiency in this context.

To bypass the fundamental limitations of this single-shot setting, a recent line of research (see, \emph{e.g.}, \cite{cesa2021regret, cesa2024bilateral, azar2022alpha, bernasconi2024no,lunghi2025better,chen2025tight}) has shifted focus toward \emph{repeated} bilateral trade. In this framework, a broker interacts with a sequence of buyer-seller pairs over a finite horizon $T$. At each time step $t$, a new pair of agents arrives with private valuations $s_t$ and $b_t$. The broker’s task is to select a pair of (potentially randomized) prices $(p_t, q_t)$ to propose to the seller and buyer, respectively. 
Then, the item is exchanged if and only if both agents find the prices favorable---specifically, when $s_t \leq p_t$ and $b_t \geq q_t$. Upon a successful trade, the broker collects $q_t$ from the buyer and remunerates the seller with $p_t$, gaining $q_t-p_t$ as profit.

The broker’s objective is typically to maximize the Gain from Trade (GFT) subject to some profit constraints. These include, in decreasing order of strictness: (i) Strong Budget Balance (SBB), where the broker sets $p_t = q_t$ at every round; (ii) Weak Budget Balance (WBB), in which $p_t \leq q_t$ at every round; and (iii) Global Budget Balance (GBB), where the broker’s cumulative profit is non-negative over the entire horizon $T$.
A series of papers~\citep{cesa2021regret,cesa2023repeated,azar2022alpha} focuses on achieving no regret against the best WBB fixed-price mechanism (or equivalently the best SBB mechanism). This line of research culminated in~\citet{bernasconi2024no}, who achieved a $\widetilde{\mathcal{O}}(T^{3/4})$ regret bound in adversarial settings---a result proven tight by~\citet{chen2025tight,lunghi2025better}.

In this paper, we tackle the more ambitious problem of competing against the best GBB fixed \emph{distribution} over prices. While this might appear to be a marginal relaxation, the GBB benchmark can outperform the best WBB mechanism by a multiplicative factor of two \citep{bernasconi2024no}. Crucially, this implies that all existing WBB-focused algorithms suffer linear regret when evaluated against the GBB optimum. While achieving no regret against a GBB benchmark is known to be impossible in adversarial environments~\citep{bernasconi2024no}, the problem remains open for \emph{stochastic} environments---which we address here.

\subsection{Our Results and Techniques}

At a high level, the main result of this paper is that there is \emph{no separation} between the one-dimensional problem of learning an optimal SBB fixed-price mechanism and the two-dimensional problem of learning an optimal GBB fixed distribution over \emph{pairs} of prices.

We first show that stochasticity alone is insufficient to guarantee sublinear regret. Specifically, we employ a ``needle-in-the-haystack'' construction to show that, unless the joint distribution of valuations admits a \emph{bounded density}, any algorithm must suffer linear regret. 
This stands in stark contrast to the SBB setting, where bounded density is unnecessary. The divergence stems from the fact that, for non-budget-balanced price pairs, even a small shift in the prices can cause the expected profit to drop significantly.

Our main result shows that \emph{bounded density is all we need}. Indeed, we design an algorithm that achieves $\widetilde{\mathcal{O}}(T^{3/4})$ regret against an optimal GBB fixed distribution over prices. This rate is tight, as it matches the known $\Omega(T^{3/4})$ lower bound against the more restrictive class of SBB mechanisms, even assuming bounded density \citep{chen2025tight}. The main challenges faced by our algorithm are the expanded two-dimensional decision space and the need to handle unknown feasibility constraints. Unlike previous mechanisms that satisfy budget-related constraints deterministically at each round, a GBB mechanism must deal with the feasibility---defined by the non-negativity of aggregate profit---over the whole time horizon.

Our algorithm is organized in three phases. In the \emph{first} one, the learner collets a profit buffer to compensate potential losses incurred during subsequent phases. We prove that the regret incurred during this phase is of the same order as the accumulated profit (up to logarithmic factors).
The \emph{second} phase faces one of the main challenges, \emph{i.e.}, the estimation of the GFT. Because GFT does \emph{not} provide bandit-like feedback, it must be learned through pure exploration. This is usually done over a (one-dimensional) grid of size $K$, where it is possible to obtain a uniform $\epsilon$-approximation over the grid with $\widetilde{\mathcal{O}}\!\left(\frac{K}{\epsilon^2}\right)$ samples. Since we work with GBB mechanisms, we need to estimate the GFT over the two-dimensional grid of $K \times K$ price pairs. This would suggest a significant increase in sample complexity. Surprisingly, we show that it is possible to obtain a uniform approximation over the two-dimensional grid without increasing the number of samples. This is achieved through a clever reuse of samples across different price pairs.
Finally, the algorithm enters a \emph{third} phase where it learns the profit function and converges to the optimal distribution over prices. Here, we leverage the stronger bandit feedback available for profit, integrating our previous GFT estimates with an optimism-in-the-face-of-uncertainty principle to identify and exploit an optimal approximately-GBB strategy.

\subsection{GGB Algorithm Vs.~GBB Baseline}

The closest to our work is \citep{bernasconi2024no}, which provides a $\widetilde{\mathcal{O}}({T^{3/4}})$ regret bound against the best SBB mechanism by using a GBB algorithm.
While our approach and results may appear similar at a high level, there is a fundamental distinction between using GBB algorithms and competing against a GBB baseline. \citet{bernasconi2024no} effectively reduce the problem to a single-dimensional one. Indeed, they utilize GBB strategies primarily as a ``smoothing'' mechanism to bypass the smoothed-adversary assumptions by \citet{cesa2024regret}.  Once a sufficient profit buffer is collected, their task simplifies to unconstrained regret minimization over a single-dimensional  set of $K$ nearly budget-balanced price pairs. In contrast, our setting is inherently two-dimensional. Moreover, rather than restricting the price set \emph{a priori}, we must interact with the environment to learn which distributions over the entire $K \times K$ price grid satisfy budget balance in expectation.

\subsection{Additional Related Works}

\citet{cesa2024bilateral} initiated the study of online bilateral trade, by showing that, with one-bit feedback, SBB mechanisms are learnable when the seller and buyer valuation distributions are independent and admit bounded density, while the problem becomes unlearnable in general.
%if either of these two assumptions is dropped.
%
Subsequent work mainly focuses on adversarial settings. \citet{azar2022alpha} design a no-$2$-regret algorithm, while~\citet{cesa2024regret} achieve sublinear regret against a smoothed adversary. \citet{bernasconi2024no,chen2025tight} remove the smoothness assumption by considering GBB algorithms, but still mainly focus on competing against an optimal SBB fixed-price mechanism. \citet{lunghi2026better} go even further by analyzing the trade-off between regret and the cumulative violation of GBB constraints.

Other lines of work study extensions of bilateral trade to multiple buyers~\citep{babaioff2024learning,lunghi2025online}, contextual settings~\citep{GaucherBCCP25}, fairer objectives~\citep{bachoc2024fair}, and symmetric settings with not predetermined agent roles~\citep{bolic2024brokerage,cesari2025an,bachoc2025a,bachoc2025b}.
There is also a rich, though less related, literature on offline bilateral trade, which focuses on approximating optimal mechanisms~\citep{colini2016approximately,colini2017fixed,blumrosen2016approximating,brustle2017approximating,colini2020approximately,babaioff2020bulow,dutting2021efficient,deng2022approximately,kang2019fixed,archbold2023non}.

\section{Preliminaries}

We study repeated bilateral trade in an online learning settings over $T \in \mathbb{N}$ rounds. At each round $t \in [T]$, a new buyer and a new seller arrive, each with a private valuation of the indivisible item to be exchanged, denoted by $s_t$ and $b_t$, respectively.\footnote{In this paper, we let $[k]$ with $k \in \mathbb{N}$ be the set of all the natural numbers from $1$ to $k$, \emph{i.e.}, $[k]\coloneqq\{1,2,\ldots,k\}$.}
We focus on a \emph{stochastic} environment, where the private valuations are assumed to be i.i.d.~samples from a fixed joint distribution $\mathcal{P}$ with support in $[0,1]^2$, \emph{i.e.},
\[
(s_1,b_1),(s_2,b_2),\ldots,(s_T,b_T) \overset{i.i.d.}{\sim} \mathcal{P}.
\]
Here, we mainly focus on distributions with \emph{bounded density}, namely distributions that admit a density function bounded above by some constant $\sigma \in \mathbb{R}^+$.
In the following, we will hide the dependence on $\sigma$ in the $\mathcal{O}(\cdot)$ notation.
%\mat{vogliamo aggiungere la def di Bounded density?}

\paragraph{Learning Protocol}
At each round $t \in [T]$, the learner posts two prices: a price $p_t \in [0,1]$ to the seller and a price $q_t \in [0,1]$ to the buyer. Notice that the learner does \emph{not} have any knowledge of the private valuations $s_t,b_t$ when posting the prices.
Then, the seller accepts the trade if and only if $s_t \le p_t$, while the buyer does so if and only if $b_t \ge q_t$. Hence, the trade happens whenever both the seller and the buyer accept.
As is customary in the literature, the learner's utility is represented by the Gain From Trade (GFT), which corresponds to the difference between the buyer's and the seller's valuations if the trade occurs, and $0$ otherwise:
\[
\GFT_t(p_t,q_t)\coloneqq(b_t-s_t)\,\mathbb{I}(s_t\le p_t)\,\mathbb{I}(b_t\ge q_t).
\]
We also define the expected GFT of prices $(p,q)\in [0,1]^2$ with respect to the joint distribution over valuations as:
\begin{align*}
\GFT(p,q)&\coloneqq\mathbb{E}_{(s,b)\sim \mathcal{P}}[(b-s)\,\mathbb{I}(s\le p)\,\mathbb{I}(b\ge q)].%\\
%&= \mathbb{E}_{(s,b)\sim \mathcal{P}}[(b-s)\mathbb{I}(s\le p_t)\mathbb{I}(b\ge q_t)].
\end{align*}

The GFT represents the increase in social welfare generated by the trade. We will omit the second argument of $\GFT_t$ and $\GFT$ whenever the two arguments are identical.

\paragraph{Budget Balance}
One of the key concepts in bilateral trade is budget balance. Intuitively, budget balance constraints guarantee that the learner is \emph{not} subsidizing the market. Formally, whenever a trade happens with prices $(p_t,q_t)$, the learner obtains a \emph{profit} equal to $q_t-p_t$. Notice that this quantity can be either positive or negative.
Formally,
\[
\Pro_t(p_t,q_t) \coloneqq (q_t-p_t)\,\mathbb{I}(s_t\le p_t)\,\mathbb{I}(b_t\ge q_t).
\]

If the prices are such that $\Pro_t(p_t,q_t)=0$ for all $t\in [T]$ (which happens when $p_t=q_t$ for all $t$), the mechanism is called \emph{Strongly} Budget Balanced (SBB). If the prices satisfy $\Pro_t(p_t,q_t) \ge 0$ for all $t\in [T]$ (which happens when $p_t\le q_t$ for all $t$), the mechanism is called \emph{Weakly} Budget Balanced (WBB).
Moreover, if the mechanism satisfies
\[
\sum_{t=1}^T \Pro_t(p_t,q_t)\ge 0,
\]
it is said to be \emph{Globally} Budget Balanced (GBB). Unlike the previous notions, GBB does \emph{not} impose constraints at each individual round, but only cumulatively over all rounds, thus allowing for more flexibility.
Similarly to the GFT, we define the expected profit of prices $(p,q)\in [0,1]^2$ with respect to the joint distribution over valuations as  
\begin{align*}
\Pro(p,q)&\coloneqq\mathbb{E}_{(s,b)\sim \mathcal{P}}[(q-p)\,\mathbb{I}(s\le p)\,\mathbb{I}(b\ge q)].
\end{align*}

\paragraph{Feedback Model}
Traditionally, three types of feedback are considered in online bilateral trade:
\begin{itemize}
    \item \emph{Full feedback}:  the learner directly observes the private valuations $(s_t,b_t)$ at the end of each round $t \in [T]$.
    \item \emph{Two-bit feedback}: the learner observes the decisions of the seller and the buyer separately, \emph{i.e.}, the values of $\mathbb{I}(s_t\le p_t)$ and $\mathbb{I}(b_t\ge q_t)$.
    \item \emph{One-bit feedback}: the learner only observes whether the trade happens or \emph{not}, \emph{i.e.}, $\mathbb{I}(s_t\le p_t)\mathbb{I}(b_t\ge q_t)$.
\end{itemize}

\paragraph{Regret Against the Best GBB Fixed Distribution}

The last component required to define our setting is a benchmark to measure the performance of learning algorithms. The first works on the topic take as a benchmark a fixed-price mechanism. Formally, the fixed price $(p,p)$ that maximizes the expected GFT. Notice that setting $p = q$, \emph{i.e.}, posting the same price to the buyer and the seller, is without loss of generality even when optimizing over WBB mechanisms.
However, this benchmark does \emph{not} exploit the flexibility that GBB provides. For this reason, \citet{bernasconi2024no} introduce a stronger baseline requiring that the profit is non-negative \emph{in expectation}. This is natural and more aligned with the line of research on constrained problems and bandits with knapsack (see, \emph{e.g.},~\cite{bada2018,Immorlica2022,castiglioni2022online}).
Formally, we take as baseline the \emph{best fixed distribution over prices that is budget balanced in expectation}.
%i.e., a distribution $\gamma \in \Delta([0,1]^2)$ satisfying
%\[
%\mathbb{E}_{(p,q)\sim \gamma}[Pro(p,q)]\ge 0,
%\]
%where $Pro(p,q)$ denotes the expected value of $Pro_t(p,q)$ with respect to the private valuations.
Let $\Delta([0,1]^2)$ be the family of all probability measures over the space $([0,1]^2,\mathcal{B}([0,1]^2))$, where $\mathcal{B}$ denotes the Borel $\sigma$-algebra.
Then, we define: 
\begin{align*}
\OPT\coloneqq \max_{\gamma\in \Delta([0,1]^2)} \ &\mathbb{E}_{(p,q)\sim \gamma}[\GFT(p,q)] \quad \textnormal{s.t.}\\
& \mathbb{E}_{(p,q)\sim \gamma}[\Pro(p,q)]\ge 0,
\end{align*}
so that the \emph{regret} over the time horizon $T$ is defined as
\[
R_T \coloneqq T\cdot \OPT -\sum_{t=1}^T \GFT(p_t,q_t).
\]

\paragraph{Problem Formulation Recap}
In this paper, we focus on the problem of minimizing the regret $R_T$, \emph{i.e.}, the regret with respect to the best distribution that is budget balanced in expectation, while at the same time guaranteeing that the learning algorithm is GBB. Moreover, we assume the more stringent form of feedback, \emph{i.e.}, one-bit feedback.

\section{Why We Need Bounded Density} \label{sec:why}

In this section, we show that, without the bounded density assumption, it is impossible to achieve sublinear regret, even under two-bit feedback. 
As usual, this is done by constructing a family of hard instances for which achieving sublinear regret for any instance is impossible.  
We use the ``needle in the haystack'' argument to show that the crucial obstacle to learning in the general setting is the highly non-continuous nature of the \GFT and \Pro functions, which make any form of discretization uneffective.  
This observation motivates the assumption of bounded density, which we maintain for the remaining of the paper.

\begin{figure}[t!]
    \centering
    \begin{tikzpicture}[scale=2.8]
\usetikzlibrary{decorations.pathmorphing}
% Axes
\draw[->] (0,0) -- (1.05,0) node[right] {$p$};
\draw[->] (0,0) -- (0,1.05) node[above] {$q$};

% Unit square
\draw[thick] (0,0) rectangle (1,1);

% Tick marks
\foreach \x in {0,0.125,0.625,1} {
    \draw (\x,0) -- (\x,-0.015);
}
\foreach \y in {0,0.375,0.875,1} {
    \draw (0,\y) -- (-0.015,\y);
}

% Labels
\node[below] at (0.125,0) {$\tfrac18$};
%\node[below] at (0.375,0) {$\tfrac38$};
\node[below] at (0.625,0) {$\tfrac58$};
%\node[below] at (0.875,0) {$\tfrac78$};

\node[left] at (0,0.375) {$\tfrac38$};
%\node[left] at (0,0.625) {$\tfrac58$};
\node[left] at (0,0.875) {$\tfrac78$};

% Epsilon
\def\eps{0.05}

\draw[|-|]
(1.05,{3/8-\eps}) -- (1.05,3/8)
node[midway,right] {\scriptsize $\epsilon$};

\draw[|-|]
({5/8+\eps},1.05) -- (5/8,1.05)
node[midway,above] {\scriptsize $\epsilon$};

% Support points (each with probability 1/4)
\filldraw[Blue] (1/8,3/8) circle (0.015)
    node[above right] {};
\draw[dashed] (0,3/8)--(1,3/8);

\filldraw[Blue] (5/8,7/8) circle (0.015)
    node[above right] {};
 \draw[dashed] (5/8,0)--(5/8,1);

\filldraw[red] ({3/8-\eps},{3/8-\eps}) circle (0.015)
    node[below left] {};
\draw[dashed] (0,3/8-\eps)--(1,3/8-\eps);

\filldraw[red] ({5/8+\eps},{5/8+\eps}) circle (0.015)
    node[below left] {};
    \draw[dashed] (5/8+\eps,0)--(5/8+\eps,1);

\draw[] (0,0)--(1,1);

\end{tikzpicture}

\includegraphics[width=\linewidth]{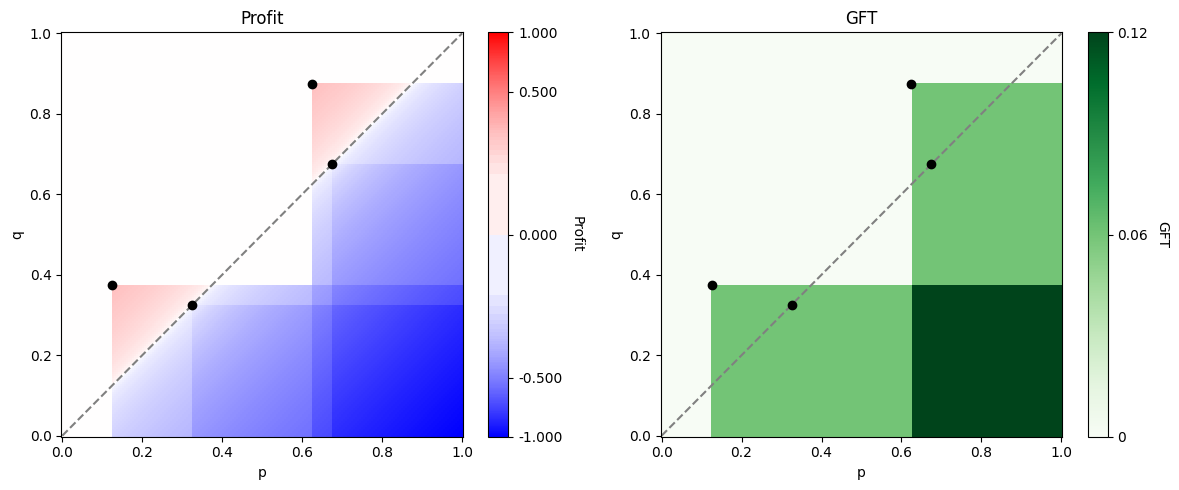}
    \caption{\emph{Top}: Support of the probability distribution $\mathcal{D}_\epsilon$. \emph{Bottom}: Expected profit (\emph{left}) and GFT (\emph{right}) under $\mathcal{D}_\epsilon$.}
\label{fig: lower bound}
\end{figure}

\subsection{Construction of the Hard Instances}

We parametrize an hard instance by a parameter $\epsilon$.
In particular, we let $\mathcal{D}_\epsilon$ be a probability distribution in  the space $([0,1]^2,\mathcal{B}([0,1]^2))$, characterized by the probability function $\mathbb{P}_{\mathcal{D}_\epsilon}: \mathcal{B}([0,1]^2)\rightarrow [0,1]$, such that
\begin{align*}
    \begin{cases}
        & \mathbb{P}_{\mathcal{D}_\epsilon}((1/8,3/8))=1/4\\
        & \mathbb{P}_{\mathcal{D}_\epsilon}((5/8,7/8))=1/4\\
        &\mathbb{P}_{\mathcal{D}_\epsilon}((3/8-\epsilon,3/8-\epsilon))=1/4\\
        &\mathbb{P}_{\mathcal{D}_\epsilon}((5/8-\epsilon,5/8-\epsilon))=1/4,
    \end{cases}
\end{align*}
It is easy to observe that any algorithm failing to identify the region $(3/8-\epsilon,3/8]\times[5/8,5/8+\epsilon)$ must incur in linear regret. Indeed, this either results in a loss of GFT or increases the probability of trades with zero GFT, which in turn reduces the expected profit. This must be compensated by price pairs with large profit and small GFT, incurring into a loss in GFT. Since $\epsilon$ can be arbitrarily small, even slightly perturbing the instance (by slightly perturbing $3/8$ and $5/8$) makes identifying these regions infeasible by standard information-theoretic arguments.
Figure~\ref{fig: lower bound} provides a graphical depiction of the hard instances.
\begin{restatable}{theorem}{lowerbound}
    For any $T\in \mathbb{N}$ and any learning algorithm, there is an instance such that $\mathbb{E}[R_T] \ge \Omega(T).$
\end{restatable}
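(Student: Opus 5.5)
The plan is to turn the informal ``needle in the haystack'' sketch into a Bayesian lower bound over a randomized family of perturbed instances $\mathcal{D}_{\epsilon}$. We work under two-bit feedback, which is stronger than the one-bit feedback of our setting, so the bound also holds there. The argument has three steps.

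\textbf{Step 1: structural gap.} Fix one instance with atoms
\[
A=(1/8,x),\quad B=(y,7/8),\quad C=(x-\epsilon,x-\epsilon),\quad D=(y+\epsilon,y+\epsilon),
\]
each of mass $1/4$, with $x\approx 3/8$ and $y\approx 5/8$. Let $N_{x,y}=(x-\epsilon,x]\times[y,y+\epsilon)$ be the needle set, written in the paper's $(q,p)$ orientation: $q\in(x-\epsilon,x]$ and $p\in[y,y+\epsilon)$. Every pair with $q\in(x-\epsilon,x]$ and $p\in[y,y+\epsilon)$ trades exactly $A$ and $B$. It therefore collects GFT $1/2$ at a profit of only about $(x-y)/2\approx-1/8$.
\begin{itemize}
\item I will compute $\OPT$ explicitly as a linear program over finitely many ``trade patterns''. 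Since the functions are piecewise constant, it suffices to consider which subset of $\{A,B,C,D\}$ trades and the best profit attainable for that subset. The optimum should mix a needle pair with a high-profit pair, for example $(p,q)=(1/8,3/8)$ or a pair trading only $B$ at a large spread.
\item I will then show a quantitative gap. There are constants $c,c'>0$, independent of $\epsilon$, $x$, $y$, such that any $\gamma$ satisfying
\[
\mathbb{E}_\gamma[\Pro]\ge-\delta \quad\text{and}\quad \gamma(N_{x,y})\le\eta
\]
has
\[
\mathbb{E}_\gamma[\GFT]\le \OPT-c+c'(\delta+\eta).
\]
This holds because outside the needle, any pair trading both $A$ and $B$ must also trade $C$ or $D$. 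Such trades add zero GFT but carry strictly negative profit, of order $-1/4\cdot 1/4$. The lost profit must then be compensated by low-GFT pairs.
\end{itemize}
This step is a finite case analysis on the LP. I expect it to be tedious but straightforward.

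\textbf{Step 2: randomization and information.} Draw $(x,y)$ uniformly from a grid of $M$ candidate positions near $(3/8,5/8)$, with $\epsilon$ so small that the needle sets $N_{x,y}$ are pairwise disjoint; take $M=2^{8T}$ and $\epsilon<1/(16M)$. The learner's entire transcript is a function of its internal randomness and at most $2T$ feedback bits. By a Fano/data-processing argument, the mutual information between $(x,y)$ and the transcript is at most $2T$ bits. Hence, for each $t$,
\[
\mathbb{P}\bigl((p_t,q_t)\in N_{x,y}\bigr)\le \frac{2T+1}{\log_2 M}\le \frac14 .
\]
A cleaner alternative is to bound directly the probability of guessing the index of $(x,y)$ from $2T$ bits by $2^{2T}/M$, uniformly over $t$.

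\textbf{Step 3: combining.} Let $\bar\gamma$ be the expected empirical distribution of the played prices, averaged over the prior and the algorithm's randomness, conditioned on the instance.
\begin{itemize}
\item Since the algorithm is GBB almost surely, $\sum_t\Pro_t\ge0$ a.s. Taking expectations, and using that $\Pro_t$ has conditional mean $\Pro(p_t,q_t)$ because the valuations are fresh, gives $\mathbb{E}_{\bar\gamma}[\Pro]\ge0$.
\item Step 2 gives $\bar\gamma(N_{x,y})\le 1/4$ on average over the prior. Since $\eta\mapsto$ GFT bound is linear, we can average the Step 1 bound over the prior.
\end{itemize}
This yields $\mathbb{E}[R_T]\ge T(c-c'/4)$. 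One then fixes a worst instance in the support of the prior.

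\textbf{Main obstacle.} The crux is Step 1's gap constant. It must genuinely beat the tolerated needle mass, which requires $c>c'\eta$. If $c$ turns out small, I would shrink $\eta$ by taking $M$ larger, for example $M=2^{T^2}$, which is free since $\epsilon$ is unconstrained. That makes the needle probability $O(1/T)$, so the gap only needs to be positive.
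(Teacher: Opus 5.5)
Your proposal is correct and follows the same route as the paper. It uses the four-atom needle-in-the-haystack family, with the zero-GFT atom at $y+\epsilon$ as in the figure, so the needle genuinely exists. It shows a constant GFT gap between budget-feasible mixtures that do and do not use the needle, and an information-theoretic argument that a needle of width $\epsilon\ll 2^{-\Theta(T)}$ cannot be hit, made rigorous through the explicit GBB assumption, the counting bound $2^{2T}/M$, and prior-averaging, where the paper only appeals to ``standard arguments.'' Two small slips need fixing: the needle's GFT is $\tfrac18$, not $\tfrac12$ (so $\OPT=\tfrac1{12}$ versus $\tfrac5{64}$ without the needle, a gap of only $\tfrac1{192}$), and Fano with $M=2^{8T}$ gives $(2T+1)/(8T)>\tfrac14$, so rely on the counting bound or a larger $M$, as you already suggest.
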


\section{Main Theorem and Proof Plan}

The main contribution of this paper is to show that, under the assumption that the distribution over valuations admits bounded density, it is possible to extend the $\widetilde{\mathcal{O}}(T^{3/4})$ regret bound of \citet{bernasconi2024no} to a stronger baseline: \emph{the best GBB fixed distribution over prices}.
Formally:

\begin{restatable}{theorem}{maintheo}\label{thm:final theorem}
Assume that $\mathcal{P}$ admits a density bounded by $\sigma \in \mathbb{R}_+$ and let $\delta\in (0,1)$.  Then, with probability at least $1-\delta$, \Cref{alg: total} is GBB and guarantees
    \[
        R_T \leq \widetilde{\mathcal{O}}(T^{3/4}).
    \]
    %against the best fixed distribution over pair of prices that is budget balanced in expectation.
    %
\end{restatable}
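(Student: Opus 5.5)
The plan follows the three-phase structure of \Cref{alg: total}. Fix a grid $\mathcal{G}_K=\{0,1/K,\ldots,1\}$ with $K\approx T^{1/4}$. The first step is a \emph{discretization lemma}. Since $\mathcal{P}$ has density bounded by $\sigma$, both $\GFT(p,q)$ and $\Pro(p,q)$ are $\mathcal{O}(\sigma)$-Lipschitz in each coordinate: moving $p$ by $\eta$ changes the trade event only on a strip of probability at most $\sigma\eta$, and the integrands are bounded by $1$. Caratheodory reduces any optimal $\gamma$ to a mixture of at most two price pairs. Rounding each pair to the grid, moving $p$ down and $q$ up by at most $1/K$, cannot increase the trade region. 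It decreases $\GFT$ by at most $\mathcal{O}(1/K)$ and does not decrease the profit margin, while the probability mass lost is $\mathcal{O}(1/K)$. Hence the grid-restricted linear program, with profit constraint relaxed to $\ge -\mathcal{O}(1/K)$, has value at least $\OPT-\mathcal{O}(1/K)$. Conversely, mixing in a small weight of a strictly profitable pair turns an $\epsilon$-infeasible distribution into a feasible one at $\mathcal{O}(\epsilon)$ GFT cost. This needs a lemma guaranteeing a pair with profit bounded below, or else that $\OPT$ is itself small.

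\textbf{Phase 1 (buffer).} Play a pair $(p,q)$ with $q-p$ large for $\widetilde{\mathcal{O}}(T^{3/4})$ rounds to accumulate profit $B=\widetilde\Theta(T^{3/4})$. The pair can be found by trying a few pairs and using the observed trade frequency, since profit has bandit feedback: $\Pro_t=(q-p)\cdot$trade bit. I would show that the GFT lost is $\mathcal{O}(B)$ up to logs. The one degenerate case is when no pair yields profit; then the trade probability across the middle is tiny, which forces $\OPT$ to be small as well.

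\textbf{Phase 2 (GFT estimation).} Use the standard decomposition $\GFT(p,q)=\int_q^1 \mathbb{P}(s\le p,\,b\ge \max(q,u))\,du-\int_0^p\mathbb{P}(s\le \min(p,u),\,b\ge q)\,du$, but write the terms via $\mathbb{P}(s\le x,b\ge y)$ at randomized thresholds. Post a uniformly random pair $(x,y)$ on the grid with $U$ drawn uniformly, so that each one-bit sample yields an unbiased estimate of $\GFT(p,q)$ for \emph{all} $(p,q)$ simultaneously, up to importance weights of order $\mathcal{O}(1)$. Concretely, I would sample one threshold uniformly in $[0,1]$ and fix the other at the grid point, so a single sample serves an entire row. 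The ``reuse'' step is to bound the variance of the shared estimator uniformly. A union bound over $K^2$ pairs then costs only $\log K$, so $N=\widetilde{\mathcal{O}}(1/\epsilon^2)\cdot K$ samples give uniform accuracy $\epsilon$ on all pairs. With $\epsilon=K^{-1}=T^{-1/4}$ this is $\widetilde{\mathcal{O}}(T^{3/4})$ rounds. Each such round may lose profit at most $1$, so I would scale the Phase 1 buffer to cover this loss as well. I expect this to be the main obstacle. Making the two-dimensional estimator truly share samples without a $K$ or $K^2$ blowup, while controlling the importance weights through bounded density, is delicate.

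\textbf{Phase 3 (optimistic LP).} For the remaining rounds, at each $t$ solve the LP over $\Delta(\mathcal{G}_K^2)$ that maximizes $\sum\gamma\,\widehat{\GFT}$ subject to $\sum \gamma\,\mathrm{UCB}_t(\Pro)\ge 0$, where $\mathrm{UCB}_t$ is an optimistic profit estimate. Then sample from the solution; the solution has support of size at most two. Standard optimistic constrained bandit analysis, a bandits-with-knapsacks style argument, gives two bounds. The cumulative profit violation is $\sum_t(\mathrm{UCB}_t-\Pro)\le\widetilde{\mathcal{O}}(\sqrt{K^2T})=\widetilde{\mathcal{O}}(T^{3/4})$, and this is absorbed by the buffer $B$ (choose constants accordingly), which yields GBB with probability $1-\delta$. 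Optimism and the discretization lemma give per-round GFT at least $\OPT-\mathcal{O}(1/K)-2\epsilon$. Summing the three phases gives $R_T\le\widetilde{\mathcal{O}}(T^{3/4}+T/K+T\epsilon+\sqrt{K^2T})=\widetilde{\mathcal{O}}(T^{3/4})$. As a safety net, if the realized profit approaches zero the algorithm switches to the SBB price $p=q$, so GBB is preserved deterministically.
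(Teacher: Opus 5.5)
\textbf{Gap 1: Phase 2 cannot deliver a full GFT estimate.} Your Phase 2 decomposition is incorrect, and the error hides the real difficulty. With the limits you wrote, the $\max$ and $\min$ are inactive, so your right-hand side equals $R(p,q)-L(p,q)$, where $R(p,q)=\mathbb{E}[(b-q)\,\mathbb{I}(s\le p,b\ge q)]$ and $L(p,q)=\mathbb{E}[(p-s)\,\mathbb{I}(s\le p,b\ge q)]$. The correct identity is $\GFT(p,q)=L(p,q)+R(p,q)+(q-p)\,\mathbb{P}(s\le p,b\ge q)$. The last term is exactly $\Pro(p,q)$, and it vanishes only on the diagonal. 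Row sharing with posted pairs $(U,q)$ and $(p,V)$ does give $L$ and $R$ on all $K^2$ grid pairs in $2KN$ rounds; this is the paper's \textsf{Simultaneous-Exploration}. The profit term, however, is an evaluation of $\mathbb{P}(s\le p,b\ge q)$ at the single pair $(p,q)$. A round posted at $(U,q)$ only carries information about this probability near $x=U$, so there is no shared $\mathcal{O}(1)$-weight estimator. Getting it to accuracy $1/K$ at all $K^2$ pairs by pure exploration costs on the order of $K^2\cdot K^2=T$ rounds. Hence your Phase 2 cannot output a uniformly $\epsilon$-accurate $\widehat{\GFT}$. Your Phase 3 puts a fixed $\widehat{\GFT}$ in the objective and optimism only in the constraint, so it has no way to learn the profit part of the objective. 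The paper instead pre-estimates only $L$ and $R$ and learns $\Pro$ online from its bandit feedback. It uses the optimistic $\overline{\Pro}_t$ both in the constraint and inside the objective $r_t=\overline{L}+\overline{R}+\overline{\Pro}_t$. The cost is the sum of the profit confidence widths along the played pairs, $\widetilde{\mathcal{O}}(\sqrt{K^2T})$, not a per-round $2\epsilon$.

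\textbf{Gap 2: no link between $\OPT$ and collectable profit.} The paper uses $\OPT\le 2\sup_{p}\GFT(p,p)\le 16\log T\max_{(p,q)\in\mathcal{F}_K}\Pro(p,q)+10/K$. It obtains this from Theorem 6.3 and Lemma 5.1 of Bernasconi et al.\ on the additive--multiplicative grid $\mathcal{F}_K$ (price gaps $2^{-i}$, $i\le\lceil\log T\rceil$). You note that restoring feasibility in the discretization step ``needs a lemma guaranteeing a pair with profit bounded below, or else that $\OPT$ is itself small''. You never supply that lemma, and Phase 1 depends on it too. Collecting $B=\widetilde\Theta(T^{3/4})$ within $\widetilde{\mathcal{O}}(T^{3/4})$ rounds fails in the intermediate regime your degenerate case skips. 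Suppose a $\Theta(T^{-1/8})$ fraction of the mass is uniform on $[0,1/3]\times[2/3,1]$ and the rest is uniform on $[1/2,1]\times[0,1/2]$; this has bounded density. Then both the best per-round profit and $\OPT$ are $\Theta(T^{-1/8})$, so $T\cdot\OPT\gg T^{3/4}$, while the buffer needs $\Theta(T^{7/8})$ rounds. Stopping earlier leaves Phases 2 and 3 unfunded. The paper runs Exp3 on $\mathcal{F}_K$ until $\beta$ is reached, however long that takes. It then uses the inequality above to turn the collected profit into a regret bound, $\tau\,\OPT\le 16\log T\,\bigl(\beta+1+\widetilde{\mathcal{O}}(\sqrt{KT})\bigr)+10T/K$. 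The same inequality is what keeps the mixing cost in the discretization lemma at $\alpha\,\OPT=\mathcal{O}(\log T/K)$.
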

The proof of the theorem is structured as follows. In \Cref{sec:BD}, we show that the bounded density assumption enables the discretization of the problem onto a finite grid---a reduction that is impossible otherwise, as discussed in \Cref{sec:why}. In \Cref{sec:high-level}, we provide a high-level overview of the algorithm that achieves no regret relative to an optimal distribution over price pairs in this grid. \Cref{SEC: profit max}, \Cref{sec:compressing}, and \Cref{sec: bandit} then detail and analyze the three phases of the algorithm. Finally, \Cref{sec:putting} exploits these results to complete the proof of \Cref{thm:final theorem}.

\section{Under Bounded Density a Grid is Enough!}\label{sec:BD}

In this section, we discuss how assuming that $\mathcal{P}$ has bounded density allows us to circumvent the linear lower bound on the regret. This is achieved by restricting to a finite grid. 
Specifically, we show that for any feasible distribution $\gamma$ over $([0,1]^2,\mathcal{B}([0,1]^2))$---that is, any distribution with non-negative expected profit---there exists a distribution $\gamma_K$ supported on a uniform grid of $K^2$ points such that
\begin{gather*}
\mathbb{E}_{(p,q)\sim \gamma}[\GFT(p,q)]
-
\mathbb{E}_{(p,q)\sim \gamma_K}[\GFT(p,q)]
\le
\mathcal{O}\!\left(\frac{1}{K}\right)\\
\mathbb{E}_{(p,q)\sim \gamma_K}[\Pro(p,q)] \ge 0.
\end{gather*}
Let $\mathcal{G}_K$ be the uniform grid
\[
\mathcal{G}_K\coloneq \left\{\left(\frac{i}{K-1},\frac{j}{K-1}\right): i,j\in \{ 0,\ldots, K-1\}\right\}.
\]
We first show that, under the bounded density assumption, ``projecting'' a feasible distribution onto $\mathcal{G}_K$ changes  the expected GFT by at most $\mathcal{O}(1/K)$, and that the obtained distribution is unfeasible by at most a factor $\mathcal{O}(1/K)$.
%
%The first result is formalized in the following lemma.
%
%we first show that, thanks to bounded density assumption, projecting a generic feasible distribution over the grid $\mathcal{G}_K$ changes the expected GFT and the expected Profit only of a quantity of the order $\mathcal{O}(\frac{1}{K})$.
\begin{restatable}{lemma}{gridprojection}
    \label{lemma: grid projection}
    Let $\gamma\in \Delta([0,1]^2)$ be a feasible distribution with density function $f _\gamma$. Moreover, let $\widehat \gamma_K$ be a distribution over $\mathcal{G}_K$ defined as follows: for all $(p,q)\in \mathcal{G}_K$
    %It is possible to build a distribution $\widehat \gamma_K$ over $\mathcal{G}_K$ defined as
\begin{align*}
   \widehat{\gamma}_K(p,q) \coloneqq
    \mathbb{P}_{(x,y)\sim \gamma}&\left((x,y)\in (p-\frac{1}{K},p]\times [q,q+\frac{1}{K})\right). 
\end{align*}
Then, the following conditions hold:
\[\mathbb{E}_{(p,q)\sim\gamma}[\textnormal{\GFT}(p,q)]-\mathbb{E}_{(p,q)\sim\widehat\gamma_K}[\textnormal{\GFT}(p,q)]\le \frac{2\sigma}{K-1}\]
and 
\[\mathbb{E}_{(p,q)\sim\widehat\gamma_K}[\textnormal{\Pro}(p,q)]\ge - \frac{2\sigma}{K-1}.\]
\end{restatable}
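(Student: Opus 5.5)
The plan is to compare the two expectations cell by cell, via a coupling between $\gamma$ and $\widehat\gamma_K$. Each point $(x,y)$ in the support of $\gamma$ is mapped to the grid point $(p,q)\in\mathcal{G}_K$ whose cell $(p-\frac1K,p]\times[q,q+\frac1K)$ contains it. So $p$ is the smallest grid value $\ge x$ and $q$ is the largest grid value $\le y$. Since the grid spacing is $\frac{1}{K-1}\ge\frac1K$, this rounding moves each coordinate by at most $\frac{1}{K-1}$. The cells should cover $[0,1]^2$ up to boundary issues, which I would handle by clamping to $[0,1]$. Under the rounding $(x,y)\mapsto(p,q)$ we have $p\ge x$ and $q\le y$. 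The seller price therefore goes up and the buyer price goes down, so the trade region only grows: $\{s\le x,\ b\ge y\}\subseteq\{s\le p,\ b\ge q\}$. Writing $\GFT(p,q)-\GFT(x,y)=\mathbb{E}[(b-s)\,\mathbb{I}(\text{in the larger region but not the smaller})]$, both expectations are then controlled pointwise in $(x,y)$.

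\textbf{GFT step.} The difference region is contained in the union of the strip $\{x<s\le p\}$ and the strip $\{q\le b<y\}$. By bounded density, each strip has $\mathcal{P}$-mass at most $\sigma\cdot\frac{1}{K-1}$, since it has width at most $\frac{1}{K-1}$ and length at most $1$. The added trades can have negative GFT, because $b<s$ is possible. Since $|b-s|\le1$, this gives $\GFT(p,q)\ge\GFT(x,y)-\frac{2\sigma}{K-1}$. Integrating over $(x,y)\sim\gamma$ and using the coupling yields the first inequality.

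\textbf{Profit step.} We have $\Pro(p,q)-\Pro(x,y)=\mathbb{E}[(q-p)\mathbb{I}(s\le p,b\ge q)]-\mathbb{E}[(y-x)\mathbb{I}(s\le x,b\ge y)]$. I would split this as
\[
\mathbb{E}\bigl[((q-p)-(y-x))\,\mathbb{I}(s\le x,\,b\ge y)\bigr]+\mathbb{E}\bigl[(q-p)\,\mathbb{I}(\text{difference region})\bigr].
\]
The second term is at least $-\frac{2\sigma}{K-1}$ by the same strip argument, using $|q-p|\le1$. The first term is the obstacle. Its sign is unfavorable, because $q-p\le y-x$ with a gap of up to $\frac{2}{K-1}$, and bounded density does not help with it directly. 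That gap only produces a $-\frac{2}{K-1}$ term, but combined with the strip term it gives a constant of the form $2(1+\sigma)$ rather than exactly $2\sigma$.

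To recover exactly the stated constant I would try to refine the accounting so that the two pieces share a bound. One option is to bound the difference region by the mass $\sigma(p-x)+\sigma(y-q)$. The price-gap term is then at most $(p-x)+(y-q)$ times the trade probability. If $\sigma\ge1$ can be assumed, which is automatic for a density on $[0,1]^2$ since $1=\int f\le\sigma$, I would aim to absorb one term into the other via a finer pointwise estimate, for instance by noting that on the difference region $|q-p|$ is small when the trade is ``nearly'' unprofitable. If that refinement fails, I would accept a bound $\mathcal{O}(\sigma/K)$, which suffices for the later use.

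The final step integrates the pointwise profit bound against $\gamma$ and uses the feasibility $\mathbb{E}_\gamma[\Pro]\ge0$. This gives $\mathbb{E}_{\widehat\gamma_K}[\Pro]\ge-\frac{2\sigma}{K-1}$.
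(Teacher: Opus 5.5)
Your GFT step is correct and is the paper's argument. The gap is in the profit step. Your last paragraph integrates ``the pointwise profit bound'' to get $-\frac{2\sigma}{K-1}$. But the only pointwise bound you actually prove is $\Pro(p,q)\ge\Pro(x,y)-\frac{2(1+\sigma)}{K-1}$, and the refinement you hope for does not exist.

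Here is a counterexample. Take $\mathcal{P}$ uniform on $[0,1]^2$ (so $\sigma=1$), $h=\frac{1}{K-1}$, and $(x,y)=(1-h+\epsilon,\,h-\epsilon)$.
\begin{itemize}
\item Your rounding sends this point to $(p,q)=(1,0)$, so $\Pro(p,q)=-1$.
\item Meanwhile $\Pro(x,y)=-(1-2h+2\epsilon)(1-h+\epsilon)^2\to-(1-2h)(1-h)^2$ as $\epsilon\to0$.
\item Hence $\Pro(p,q)-\Pro(x,y)\to-4h+5h^2-2h^3$, which is strictly below $-2h=-2\sigma h$ for every $K\ge4$.
\end{itemize}
In this example both of your terms are near their worst case at the same time. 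The margin drops by almost $2h$ on a trade event of probability close to $1$. The newly created trades, of probability about $2h$, happen at margin $q-p=-1$. So the idea that $|q-p|$ is small on the difference region fails. No pointwise estimate can give the constant $2\sigma$, and adding $\mathbb{E}_\gamma[\Pro]\ge0$ at the end does not change that.

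The paper's proof has the same hole. It obtains the profit inequality ``by a similar argument'' to the GFT one, which treats the per-trade margin as unchanged under rounding. That is exactly the pointwise inequality refuted above. So your diagnosis is correct, and it is not a weakness of your route relative to the paper's.

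Reaching $2\sigma$, if it holds at all, would require using the feasibility of $\gamma$ globally rather than pointwise. Points like the one above have $\Pro(x,y)\approx-1$, so they can only carry $\gamma$-mass that is paid for by positive-profit points. The sound fix is to state and prove the bound $-\frac{2(1+\sigma)}{K-1}\ge-\frac{4\sigma}{K-1}$, using $\sigma\ge1$. This suffices for the $\OPT-\OPT_K$ lemma once $\frac{2\sigma}{K}$ is replaced by this constant in the definition of $\alpha$.

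Separately, say explicitly that you are changing the definition of $\widehat\gamma_K$. The statement's cells have width $\frac1K$, while the grid spacing is $\frac1{K-1}$. They therefore leave \emph{interior} gaps of total length $\frac1K$ per coordinate, not just boundary effects. As a result $\widehat\gamma_K$ need not be a probability measure; it vanishes if $\gamma$ is supported where $x$ lies in a gap. Your map to the smallest grid value $\ge x$ and the largest grid value $\le y$ corresponds to cells of width $\frac1{K-1}$, which is the right repair.
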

Next, we show that, starting from the ``projected'' distribution, one can build another distribution that is (exactly) feasible, by only losing a factor $\mathcal{O}(\log(T) /K)$ in terms of expected GFT.
This allows us to prove that there exists a feasible distribution over $\mathcal{G}_K$ that achieves an optimal expected GFT up to an additive error $\mathcal{O}(\log(T)/K)$.
Formally, let \textnormal{$\OPT_K$} denote the value of an optimal distribution over prices that is budget balanced in expectation, \emph{i.e.},
    \begin{subequations} \label{Pr:optK}
    \begin{align}
        \OPT_K \coloneqq \max_{\gamma \in \Delta(\mathcal{G}_K)} &\mathbb{E}_{(p,q)\sim \gamma}[\GFT(p,q)] \quad \textnormal{s.t.}\\
        & \mathbb{E}_{(p,q)\sim \gamma}[\Pro(p,q)]\ge 0.
    \end{align}
    \end{subequations}
It is then possible to prove the following lemma.
\begin{restatable}{lemma}{LPGRID}\label{lem: OPTK LP grid}
    For any $K \in \mathbb{N}$, it holds
    \textnormal{\[
    \OPT-\OPT_K\le \mathcal{O}\left(\frac{\log(T)}{K}\right).
    \]}
\end{restatable}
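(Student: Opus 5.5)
Starting from an optimal (or $1/T$-optimal) feasible distribution $\gamma^\star$ for \OPT, I first apply \Cref{lemma: grid projection} to obtain $\widehat\gamma_K$ on $\mathcal{G}_K$. It loses at most $\frac{2\sigma}{K-1}$ in expected GFT and has expected profit at least $-\frac{2\sigma}{K-1}$. If $\gamma^\star$ has no density, I first smooth it slightly, or rerun the argument of \Cref{lemma: grid projection} directly; the bound only uses the bounded density of $\mathcal{P}$, not of $\gamma$. What remains is to repair the profit deficit $\eta\coloneqq\frac{2\sigma}{K-1}$ by mixing $\widehat\gamma_K$ with a grid distribution of positive profit, while losing only $\mathcal{O}(\log(T)/K)$ in GFT.

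The repair mixes $\widehat\gamma_K$ with a point mass on a price pair $(p_0,q_0)\in\mathcal{G}_K$ with $\Pro(p_0,q_0)=\rho>0$. Setting $\gamma_K=(1-\lambda)\widehat\gamma_K+\lambda\delta_{(p_0,q_0)}$, I need $(1-\lambda)(-\eta)+\lambda\rho\ge 0$, i.e.\ $\lambda\approx \eta/(\rho+\eta)$. Since GFT is nonnegative and bounded by $1$, the GFT loss is at most $\lambda\le\eta/\rho$. So the whole difficulty is to find a grid pair whose profit $\rho$ is large relative to the GFT it sacrifices, and this is the main obstacle. If $\rho$ is a constant, we lose only $\mathcal{O}(1/K)$. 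But $\rho$ can be tiny when the mass of $\mathcal{P}$ sits near the diagonal or trades are rare. The $\log T$ factor suggests a case split by scales.

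\textbf{Case 1:} $\OPT\le \mathcal{O}(\log(T)/K)$. Here $\OPT_K\ge 0$, witnessed by a point mass on any $(p,p)$, so the statement is trivial.

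\textbf{Case 2:} $\OPT$ is larger. Then some mass of $\mathcal{P}$ lies in $\{b-s\ge \Delta\}$ with $\Delta\approx \OPT/2$. I would consider dyadic scales $\Delta_k=2^{-k}$ for $k\le\log_2 T$, and find a scale and a pair $(p,q)$ with $q-p\approx\Delta_k/2$ and trade probability $P_k$ such that $\rho\gtrsim \Delta_k P_k$. The scales are chosen so that the GFT mass in $\{b-s\in[\Delta_k,2\Delta_k)\}$ is captured by pigeonholing over the $\mathcal{O}(\log T)$ scales, which is where the $\log T$ enters. Alternatively, to avoid the $\log$, I would use a cleaner convex-combination trick: instead of a point mass, mix with a \emph{scaled-down-profit version of $\widehat\gamma_K$ itself}. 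I would shift each price pair $(p,q)$ of $\widehat\gamma_K$ to $(p-\frac1K,q+\frac1K)$ (the grid neighbour). This raises the margin by $2/K$ per trade and reduces the trade probability by at most $\mathcal{O}(\sigma/K)$ (bounded density). It also loses at most $\mathcal{O}(\sigma/K)$ GFT, and it gains profit roughly $\frac{2}{K}\cdot\Pr(\text{trade})$.

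When the trade probability under $\widehat\gamma_K$ is at least $\Omega(\sigma/\log T)$, a constant number of such shifts (or $\mathcal{O}(\log T)$ of them, each costing $\mathcal{O}(1/K)$ GFT) compensates the $\eta$ deficit. When the trade probability is below $1/\log T$ at every level, the GFT of $\widehat\gamma_K$ is itself small, and we fall back to Case 1 at the appropriate scale. Balancing these two regimes over $\mathcal{O}(\log T)$ geometric thresholds gives total loss $\mathcal{O}(\log(T)/K)$.

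Finally, I would add the three losses: the $1/T$ from approximate optimality, the $\mathcal{O}(1/K)$ from projection, and the $\mathcal{O}(\log(T)/K)$ from repair. This concludes $\OPT-\OPT_K\le\mathcal{O}(\log(T)/K)$. The step I would check most carefully is the repair-by-shifting argument: that the shifted distribution's profit gain genuinely scales with trade probability rather than being cancelled by the probability lost at the margins.
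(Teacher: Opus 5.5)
There is a genuine gap. The step you yourself call the main obstacle is never closed, and two ingredients are missing.

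\textbf{The accounting of the repair.} With $\gamma_K=(1-\lambda)\widehat\gamma_K+\lambda\delta_{(p_0,q_0)}$ one has $\OPT-\mathbb{E}_{\gamma_K}[\GFT]\le(1-\lambda)\eta+\lambda\bigl(\OPT-\GFT(p_0,q_0)\bigr)\le\eta+\lambda\OPT$. The factor $\OPT$ is essential: with your cruder bound $\lambda\le\eta/\rho$ you would need $\rho=\Omega(1/\log T)$, which fails in general.

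\textbf{A linear relation between $\OPT$ and the best single-pair profit.} This is exactly \Cref{lemma: profit grid max}: $\OPT\le16\log T\max_{\mathcal{F}_K}\Pro+10/K$, moved to $\mathcal{G}_K$ at cost $16\eta\log T$ through the projection bound. The paper then takes $(p^+,q^+)\in\argmax_{\mathcal{G}_K}\Pro$ and $\alpha=\eta/\Pro(p^+,q^+)$. If $\Pro(p^+,q^+)\ge\eta$, then $\alpha\OPT\le16\eta\log T+\alpha(16\eta\log T+10/K)=O(\log T/K)$. Otherwise the same inequality forces $\OPT=O(\log T/K)$, and $\OPT_K\ge0$ concludes. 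So the case split must be on the best grid profit, not on the size of $\OPT$.

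Your dyadic sketch gestures at this inequality, but as set up it cannot give a linear relation. You start from the mass of $\{b-s\ge\Delta\}$, and valuation pairs at a given scale need not straddle a common price. Pigeonholing them onto a single pair $(p,p+\Delta_k/2)$ therefore loses a further factor of order $\Delta_k$. This yields only something like $\rho\gtrsim\OPT^2/\log T$, hence a repair cost of order $\eta\log T/\OPT$, which is $O(\log T/K)$ only when $\OPT=\Omega(1)$.

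What makes the relation linear is two steps:
\begin{enumerate}
\item First replace the GBB distribution, which may contain subsidizing pairs, by a single price via $\OPT\le2\max_p\GFT(p,p)$ (Theorem~6.3 of \citet{bernasconi2024no}).
\item Then split $\GFT(p^\star,p^\star)=\mathbb{E}[(b-p^\star)\mathbb{I}(s\le p^\star\le b)]+\mathbb{E}[(p^\star-s)\mathbb{I}(s\le p^\star\le b)]$ into dyadic layers of $b-p^\star$ and $p^\star-s$. Because all these trades share the price $p^\star$, the $i$-th layers are at most $2\Pro(p^\star,p^\star+2^{-i})$ and $2\Pro(p^\star-2^{-i},p^\star)$, with no positional pigeonholing.
\end{enumerate}

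\textbf{The shifting alternative does not fill the gap.} Moving $(p,q)$ to $(p-\frac1K,q+\frac1K)$ changes that pair's profit by $\frac2K P'-(q-p)(P-P')$. Here $P\ge P'$ are the old and new trade probabilities, and $P-P'$ can be as large as $2\sigma/K$. For positive-margin pairs the loss term can reach $2\sigma(q-p)/K$ against a gain of $2P'/K$, so the shift can \emph{decrease} profit. For example, if $\mathcal{P}$ is uniform on $[p-a,p]\times[q,q+a]$ with $q-p>a$, the shift lowers the profit by roughly $\frac2K\bigl(\frac{q-p}{a}-1\bigr)$; this is exactly the cancellation you were worried about.

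Even where shifting helps, each shift gains only about $\frac2K\bar P$ in profit ($\bar P$ the trade probability under $\widehat\gamma_K$), while costing up to $2\sigma/K$ in GFT. Closing the deficit $\eta$ can therefore cost of order $\sigma^2/(K\bar P)$, which is $O(\log T/K)$ only if $\bar P=\Omega(1/\log T)$.

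Your fallback for the complementary regime is false. $\bar P<1/\log T$ only gives $\mathbb{E}_{\widehat\gamma_K}[\GFT]\le1/\log T$, which is far above $\log T/K$ (think of $\bar P\approx K^{-1/2}$ with trades of constant surplus). So you cannot drop into your Case~1 there, and the ``balancing over geometric thresholds'' has nothing to stand on.

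Your projection step, the point-mass mixing, and the trivial case via $\OPT_K\ge0$ are all as in the paper. What the proof needs is \Cref{lemma: profit grid max} together with the $\lambda\OPT$ accounting above.
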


\begin{proof}[Proof Sketch]
Let $\gamma^\star \in \Delta([0,1]^2)$ be an optimal feasible distribution, so that
\[
\OPT = \mathbb{E}_{(p,q)\sim\gamma^\star}[\GFT(p,q)].
\]
Let $\widehat{\gamma}_K^*$ denote its ``projection'' onto $\mathcal{G}_K$.
Let $(p^\star,q^\star)\in\mathcal{G}_K$ be a price pair maximizing the expected profit, and define
\[
\gamma_K^\star
\coloneqq
(1-\alpha)\widehat{\gamma}_K^\star
+
\alpha \delta_{(p^\star,q^\star)},
\]
for a suitable choice of $\alpha$, where $\delta_{(p^\star,q^\star)}$ is the probability distribution that puts all probability mass in $(p^\star,q^\star)$.
Then:
\begin{align*}
   \OPT - \OPT_K &\le (1-\alpha)(\OPT-\mathbb{E}_{(p,q)\sim \widehat{\gamma}_K^\star}[\GFT])\\
&\quad +\alpha (\OPT-\GFT(p^\star,q^\star))\\
&\le\frac{2\sigma}{K-1}+\alpha \OPT, 
\end{align*}
where the second inequality follows from Lemma~\ref{lemma: grid projection}.

Intuitively, if $\Pro(p^\star,q^\star)$ is large, feasibility can be restored with a very small $\alpha$.
If instead $\Pro(p^\star,q^\star)$ is small, the connection between \GFT and \Pro implies that $\OPT$ itself must be sublinear.
By using this relation, it holds:
\[
\alpha \OPT
\le
\mathcal{O}\!\left(\frac{\log(T)}{K}\right),
\]
which leads to the result. We defer the formal proof of this lemma to \Cref{app:BD} in the appendix.
\end{proof}

%Let $\gamma^*$ be a feasible distribution that maximizes the expected GFT. Consider the distribution obtained by moving the probability of every $(p,q)\in [0,1]^2$ in the support of $\gamma^*$ to the nearest point in $\mathcal{G}_K$. 

%It is easy to see that this distribution remains close to $\gamma^*$ in terms of both expected GFT and expected profit. 
%
%This follows from the fact that the difference in expected GFT and profit between two pairs of prices $(p,q)$ and $(p',q')$ can be upper bounded by the probability that $(s,b)\sim \mathcal{P}$ is such that a trade occurs under \emph{exactly} one of the two price pairs. This probability can be easily controlled thanks to the bounded density assumption. \anna{Promemoria per me: spiegare mix profit positivo forse}

\section{High-Level Construction of the Algorithm} \label{sec:high-level}
%As we showed in the previous section, without assuming bounded density the problem becomes unlearnable $(R_T=\Omega(T))$. Therefore, from now on we focus on regret minimization under the assumption of bounded density.

As shown in the previous section, the bounded density assumption allows us to restrict the attention to a uniform grid $\mathcal{G}_K$ of $K^2$ points. This results into a cumulative discretization error of $\mathcal{O}\left( \frac T K \right)$ for both GFT and profit. This error term characterizes the trade-off between the samples needed to explore the price space (which increase with $K$) and the discretization error (which decreases with $K$). Once the optimal value of $K$ is picked, it suffices to design an algorithm that operates only on the discretized price space. 

\begin{algorithm}[!htp]\caption{%Scegliere nome: Three phase algorithm \mat{è anche ok non avere un nome}
}\label{alg: total}
    \begin{algorithmic}[1]
        \REQUIRE $T \in \mathbb{N}$, $\delta \in (0,1)$
        \STATE {Set} $\beta, \mathcal{F}_K,\mathcal{G}_K,N$ as functions of $T$ and $\delta$ 
        \STATE {Run} \textsf{Profit-Max}$(\mathcal{F}_K,\beta,T)$
        %\IF{Current round $\tau\le T$}
        \STATE 
        $(\overline{R},\overline{L}) \gets $ 
        \textsf{Simultaneous-Exploration}$(\mathcal{G}_K,N)$
        %\ENDIF
        %\IF{Current round $\tau\le T$}
        \STATE {Run} \textsf{Explore-Exploit}$(\mathcal{G}_K\overline{R},\overline{L},N,T,\delta)$
        %\ENDIF
    \end{algorithmic}
\end{algorithm}

Our algorithm is conceptually divided into three parts (see \Cref{alg: total}), each with its own specific objective:
\begin{itemize}
\item \textbf{Profit collection phase.}
The first phase is dedicated to collecting profit. Here, we use a profit maximization algorithm, called $\textsf{Profit-Max}$, until a target profit $\beta$ is reached. This allows us to explore safely in the subsequent phases, even by (slightly) violating the budget constraint. 
This revenue maximization phase was introduced by \citet{bernasconi2024no} together with the notion of GBB. Interestingly, they show that collecting budget does \emph{not} lead to catastrophic regret, as one might expect. In particular, the collected profit is almost linear in the GFT of the optimal fixed-price mechanism.
%
%Intuitively, this means that if an instance is particularly hard, the mechanism will fail to accumulate sufficient budget quickly; however, this also implies that the instance is so hard that even the optimal fixed-price mechanism performs poorly, and thus playing solely to maximize profit does not lead to linear regret. Conversely, if the optimal fixed-price mechanism performs very well (i.e., the optimum is linear), then accumulating budget becomes easier, and the algorithm remains in this skewed-objective phase for only a small fraction of the rounds. \mat{toglierei questo paragrafo. Più che altro non è chiarissimo cosa si intende per Hard. Inoltre non è un contributo del nostro paper.}
%
In this phase, our main contribution is to show that a similar result holds when comparing against the best distribution over price pairs.

\item \textbf{Pure exploration phase.}
The second phase consists of pure exploration, and it is performed by the algorithm \textsf{Simultaneous-Exploration} over the grid $\mathcal{G}_K$. Indeed, it is well known that, for GFT, one-bit feedback is much weaker than bandit feedback. In particular, picking a pair of prices $(p,q)$ does \emph{not} suffice to obtain any information about the expected GFT of $(p,q)$. Estimating the GFT may require exploring other pairs of prices that can potentially be arbitrarily worse.
%This type of feedback has been compared to a graph feedback structure in which the self-loop associated with standard bandit feedback is missing.
To address this exploration challenge, we decompose $\GFT(p,q)$ into two components: one that is not directly observable by selecting $(p,q)$ and must be estimated via a pure exploration strategy, and one that comes with bandit feedback, allowing exploration and exploitation to be performed simultaneously by choosing $(p,q)$. In this phase, the algorithm learns the ``pure exploration'' component, while the ``bandit'' component is learned in the next phase.
The primary challenge here lies in the two-dimensional nature of our price set, which distinguishes our setting from prior work. While standard methods only require learning the GFT for SBB mechanisms (\emph{i.e.}, $K$ prices), we need estimates for the entire grid of $K^2$ prices, including prices that are \emph{not} budget balanced. Our key contribution in this phase is to prove that these $K^2$ estimates can be obtained by exploring only $2K$ price pairs. This shows that expanding the estimation phase from SBB to arbitrary prices does \emph{not} increase the required sample complexity.

\item \textbf{GFT optimization phase.}
The third phase consists of the actual GFT optimization, which is performed by \textsf{Explore-Exploit}. In this phase, we use the estimates obtained during exploration to construct an equivalent problem that can be formulated as a constrained online optimization problem with bandit feedback. Specifically, we use the estimate of the non-directly observable component of the GFT as a bias applied to the bandit loss corresponding to the directly observable component. This step is fundamental: by using a suitably adapted optimistic approach, we obtain a dependence on the number of actions of $\mathcal{O}(\sqrt{K^2})$ rather than the $\mathcal{O}(K^2)$ dependence that would arise from a pure explore-then-commit approach.
\end{itemize}

\section{Profit Collection}
\label{SEC: profit max}

\Cref{alg: total} starts by collecting profit. Indeed, given the global budget balance constraint, the idea is to accumulate some budget in order to gain the flexibility needed to explore the space of possible price distributions in later phases.

The general intuition behind this phase is the following. If the instance is ``simple'' enough, meaning that there is an easy way to accumulate positive profit, this phase will do so rapidly. If the instance is instead ``hard'', and the possibility of accumulating profit is limited, then profit maximization will take longer (possibly lasting for the entirety of the rounds), but the regret incurred will also be small.

In practice, the algorithm $\textsf{Profit-Max}$ instantiates a regret minimizer over the  additive-multiplicative grid $\mathcal{F}_K$ introduced by~\citet{bernasconi2024no}. For the definition of $\mathcal{F}_K$ and a more details, see \Cref{APP: profit max} in the appendix.

%\begin{algorithm}[H]\caption{Profit-Max (\cite{bernasconi2024no}}
%\begin{algorithmic}[1]
%    \STATE \anna{da capire se mettere o togliere}
%\end{algorithmic}
%\end{algorithm}

\begin{restatable}{lemma}{profitgridmax} \label{lemma: profit grid max}
    Given the additive-multiplicative grid $\mathcal{F}_K$ by~\citet{bernasconi2024no}, it holds that
    \textnormal{\[
    \OPT \le 16 \log T  \sup_{(p,q)\in \mathcal{F}_K} \Pro(p,q) + \frac{10 }{K}.
    \]}
\end{restatable}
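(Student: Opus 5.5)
The plan is to go through the best fixed price. I will show that the GBB benchmark is at most twice the optimal fixed-price GFT. Then I will invoke the analogous bound of \citet{bernasconi2024no} for fixed prices. Let $\OPT_{\mathrm{fix}}\coloneqq\sup_{x\in[0,1]}\GFT(x,x)$. \citet{bernasconi2024no} prove, for their additive-multiplicative grid, a bound of the form $\OPT_{\mathrm{fix}}\le 8\log T\sup_{(p,q)\in\mathcal{F}_K}\Pro(p,q)+\frac{5}{K}$. I will import this from their analysis, which is recalled in \Cref{APP: profit max}. The factors $16$ and $10$ in the statement are exactly twice these constants. So the whole task reduces to proving $\OPT\le 2\,\OPT_{\mathrm{fix}}$.

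\textbf{Step 1: a pointwise inequality.} Fix any $(p,q)\in[0,1]^2$. I aim to show
\[
\GFT(p,q)\le \GFT(p,p)+\GFT(q,q)-\Pro(p,q).
\]
\emph{Case $p\le q$.} Every trade at $(p,q)$ satisfies $s\le p\le b$, so it is also a trade at $(p,p)$. Since $b-s\ge q-p\ge 0$ on these events, $\GFT(p,q)\le\GFT(p,p)$. Moreover $\GFT(q,q)\ge 0$ and $\Pro(p,q)\le 0$ fails in general, so I need care here. I will instead use $\GFT(q,q)\ge\Pro(p,q)$. This holds because on $\{s\le p,b\ge q\}$ we have $s\le q\le b$ and $b-s\ge q-p$; hence $\GFT(q,q)\ge \mathbb{E}[(b-s)\mathbb{I}(s\le p,b\ge q)]\ge \Pro(p,q)$.

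\emph{Case $p>q$.} I split the trade event $\{s\le p,\,b\ge q\}$ into three parts:
\begin{itemize}
\item[(i)] $\{b\ge p\}$, which is contained in the trade event at $(p,p)$;
\item[(ii)] $\{b<p,\ s\le q\}$, which is contained in the trade event at $(q,q)$;
\item[(iii)] $\{q\le b<p,\ q<s\le p\}$, on which $b-s\le p-q$.
\end{itemize}
Negative contributions $b<s$ can only help, so I drop them. Using the part (iii) bound together with $-\Pro(p,q)=(p-q)\,\mathbb{P}(s\le p,b\ge q)$ then gives the inequality.

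\textbf{Step 2: averaging.} Let $\gamma^\star$ be feasible and optimal. Taking expectation of the pointwise inequality over $(p,q)\sim\gamma^\star$ gives
\[
\OPT\le 2\,\OPT_{\mathrm{fix}}-\mathbb{E}_{\gamma^\star}[\Pro(p,q)]\le 2\,\OPT_{\mathrm{fix}},
\]
where the last step uses feasibility. Chaining this with the fixed-price bound of \citet{bernasconi2024no} yields the lemma.

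\textbf{Main obstacle.} The step I expect to need the most care is the exact form and constants of the imported fixed-price bound. That bound comes from a dyadic (multiplicative) decomposition of the gap $b-x$ into $\mathcal{O}(\log T)$ scales. At each scale, the mass $\mathbb{P}(s\le x,\,b\ge x+2^{-k})$ is charged to $\Pro(x,x+2^{-k})/2^{-k}$. The grid $\mathcal{F}_K$ is what makes the additive rounding cost $\mathcal{O}(1/K)$. If their stated constants differ, I would re-derive that bound along these lines, applying it to $\GFT(p,p)$ and $\GFT(q,q)$ separately. The resulting constants should match $16\log T$ and $10/K$ up to bookkeeping.
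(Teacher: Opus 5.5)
Your proposal is correct and takes the same two-step route as the paper. First bound $\OPT$ by twice the best fixed-price gain from trade, then import the fixed-price bound $\sup_{x}\GFT(x,x)\le 8\log T\sup_{(p,q)\in\mathcal{F}_K}\Pro(p,q)+\frac{5}{K}$ (Lemma~5.1 of \citet{bernasconi2024no}) and double its constants. The only difference is that the paper cites Theorem~6.3 of \citet{bernasconi2024no} for $\OPT\le 2\sup_x\GFT(x,x)$, whereas you prove it directly from the pointwise inequality $\GFT(p,q)\le\GFT(p,p)+\GFT(q,q)-\Pro(p,q)$ averaged against any feasible distribution; that inequality holds in both cases $p\le q$ and $p>q$.
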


\begin{proof}[Proof Sketch]
We start relating $\OPT$ to the GFT of the best fixed price, with a multiplicative factor of $2$. By applying Theorem 6.3 in \cite{bernasconi2024no}, we obtain:
\[
\OPT \le 2 \sup_{p \in [0,1]} \GFT(p,p).
\]
Then, by restricting to a one-dimensional set of prices (\emph{i.e.}, $p=q$), it is possible to construct an additive-multiplicative grid similar to the one proposed by \citet{bernasconi2024no}. This grid has cardinality $\mathcal{O}(\log(T)K)$ and guarantees that 
\[ \max_{p\in [0,1]} \GFT(p,p)\le 8 \log(T) \max_{(p,q)\in \mathcal{F}_K} \Pro(p,q) + \frac{5}{K},\]
as desired.
    %By Theorem 6.3 of \cite{} and by the $\sigma$-Lipschitzness of the GFT function when $\mathcal{P}$ admits density bounded by $\sigma$, we have:
    %\[OPT\le 2 \sup_{p \in [0,1]} GFT(p) \le  2 \left(\sup_{(p,p)\in \mathcal{G}_K} GFT(p) + \frac{T \sigma}{K}\right).\]
    %Then, similarly to the analisys of \cite{}
    %\[OPT\le 24 \log T  \max_{(p,q)\in \mathcal{F}_K} Pro(p,q) + \frac{2\sigma T}{K}\]
%\end{proof}
\end{proof}
Implementing a regret minimizer over $\mathcal{F}_K$ that maximizes profit until a threshold is reached (see \Cref{alg:profit} in \Cref{APP: profit max} in the appendix) yields the following lemma.

\begin{lemma}
\label{lemma: regtret profit max}
    With proability at least $1-\delta$, if \textnormal{\textsf{Profit-Max}} with target budget $\beta$ stops at round $\tau\in [T]$, then 
    \textnormal{\begin{align*}
        &\tau \OPT \hspace{-0.05cm}- \hspace{-0.05cm}\sum_{t=1}^\tau \hspace{-0.05cm}\GFT(p_t,q_t)\hspace{-0.05cm}\le \hspace{-0.05cm}\widetilde{\mathcal{O}}\hspace{-0.05cm}\left(\beta\hspace{-0.05cm} + \hspace{-0.05cm}\frac{T}{K}\hspace{-0.05cm}+\hspace{-0.05cm}\sqrt{KT\log(1/\delta)}\right).
        %\\16(\beta+1)\log(T)+\frac{5T}{K}+512\log(T)\sqrt{|F_K|T\log(|F_K|T)}
    \end{align*}}
\end{lemma}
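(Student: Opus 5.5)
The plan is to combine \Cref{lemma: profit grid max} with a high-probability regret guarantee for the profit-maximizing regret minimizer run on $\mathcal{F}_K$. Profit enjoys bandit feedback: after posting $(p,q)$ we observe the trade bit, hence $\Pro_t(p,q)$. So \textsf{Profit-Max} can run a standard adversarial/stochastic bandit algorithm, for instance EXP3-IX, or UCB since the environment is stochastic. Its high-probability guarantee has the form
\[
\tau \max_{(p,q)\in\mathcal{F}_K}\Pro(p,q) - \sum_{t=1}^\tau \Pro_t(p_t,q_t) \le \widetilde{\mathcal{O}}\bigl(\sqrt{|\mathcal{F}_K|\,T\log(1/\delta)}\bigr),
\]
uniformly over the stopping time $\tau$. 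This uniformity can be obtained by a union bound over $\tau\in[T]$, or by using an anytime bound together with Freedman's inequality. Since $|\mathcal{F}_K|=\mathcal{O}(K\log T)$, the right-hand side is $\widetilde{\mathcal{O}}(\sqrt{KT\log(1/\delta)})$.

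Next, I would bound the profit collected by round $\tau$ from above. The algorithm stops as soon as the cumulative profit reaches $\beta$, and each round changes the profit by at most $1$ in absolute value. Hence $\sum_{t=1}^{\tau}\Pro_t(p_t,q_t)\le \beta+1$; this also covers $\tau=T$ without reaching the target. Combining with the previous display gives
\[
\tau \max_{\mathcal{F}_K}\Pro \le \beta+1+\widetilde{\mathcal{O}}\bigl(\sqrt{KT\log(1/\delta)}\bigr).
\]
Multiplying \Cref{lemma: profit grid max} by $\tau\le T$ yields
\[
\tau\,\OPT \le 16\log T\,\tau\max_{\mathcal{F}_K}\Pro + \frac{10T}{K} \le 16\log T\bigl(\beta+1+\widetilde{\mathcal{O}}(\sqrt{KT\log(1/\delta)})\bigr)+\frac{10T}{K}.
\]

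Finally, since $\GFT(p_t,q_t)\ge 0$ always, the regret over the first $\tau$ rounds is at most $\tau\OPT$. The displayed bound then gives the claim, with the $\log T$ factor absorbed into $\widetilde{\mathcal{O}}$. I would also remark that using a supremum versus a maximum over the finite set $\mathcal{F}_K$ is immaterial.

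The main obstacle is the first step: the high-probability bandit regret bound must hold at the data-dependent stopping time $\tau$. It must also be stated relative to realized profits rather than expected ones. I would handle this by writing $\Pro_t(p_t,q_t)=\Pro(p_t,q_t)+\text{martingale difference}$, applying Azuma--Hoeffding uniformly over all prefixes (which costs a $\log T$ factor via a union bound), and then invoking a pseudo-regret bound for the expected profits that holds for every prefix length.
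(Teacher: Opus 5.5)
Your proposal is correct and is essentially the argument the paper relies on. The paper gives no separate proof; the intended bound is the standard profit-phase argument of \citet{bernasconi2024no}, which is exactly your combination of \Cref{lemma: profit grid max}, the stopping rule $\sum_{t\le\tau}\Pro_t(p_t,q_t)\le\beta+1$, and a high-probability bandit regret bound on the $\mathcal{O}(K\log T)$ arms of $\mathcal{F}_K$ (your choice of Exp3-IX or UCB is in fact needed, since plain Exp3 only controls expected regret). One small imprecision: $\GFT(p_t,q_t)\ge 0$ does not hold for arbitrary price pairs, since it can fail when $p>q$. It does hold here, because every pair in $\mathcal{F}_K$ has $p<q$, so any trade satisfies $s\le p<q\le b$.
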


\section{Compressing the Exploration Time} \label{sec:compressing}

A fundamental challenge of learning in online bilateral trade is the inherent tension between exploration and exploitation. Under one-bit or two-bit feedback, the GFT of a chosen price pair $(p,q)$ is never directly observable. This limitation has been leveraged in prior work to establish lower bounds on learning rates when buyer and seller valuations are correlated. Specifically, bilateral trade can be viewed as a variant of ``apple-tasting'', where determining the optimal pair of prices requires choosing prices that are clearly suboptimal. Consequently, in this setting, even optimal learning algorithms require a dedicated exploration phase. Our objective is to optimize this phase, showing how state-of-the-art approaches for SBB mechanisms can be extended to price pairs without changing the sample complexity.

The general idea is to decompose the GFT into two main components: one that corresponds to an integral term, \emph{i.e.}, the part of the GFT that requires an expensive pure exploration phase, and one that can be estimated while exploiting (akin to bandit feedback). To this end, we start from the following decomposition of the GFT:
\begin{align*}
\label{eq: unbiased est}
&\GFT(p,q) = \int_{0}^p \mathbb{P}(s \le x, q \le b)\, dx \\
&+ \int_{q}^1 \mathbb{P}(s \le p, x \le b)\, dx + (q-p)\,\mathbb{P}(s \le p, q \le b).
\end{align*}

\begin{restatable}{lemma}{LRdec}
\label{lemma: L R decomposition}
Let $U,V$ be independent random variables distributed uniformly over $[0,1]$. Then, the expected GFT generated by prices $(p,q)\in [0,1]^2$ can be decomposed as
\textnormal{\begin{align*}
\GFT(p,q)&= \mathbb{E}_{U, (s,b)\sim \mathcal{P}}[\mathbb{P}(s \le U \le p, q \le b)]\\
&\hspace{-0.5cm}+ \mathbb{E}_{V,(s,b)\sim \mathcal{P}}[\mathbb{P}(s \le p, q \le V \le b)]+ \Pro(p,q).
\end{align*}}
\end{restatable}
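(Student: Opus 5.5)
The plan is a pointwise computation for a fixed valuation pair $(s,b)$, followed by taking expectations. Before starting, I would fix how the inner probabilities in the statement are read: $\mathbb{P}(s \le U \le p, q \le b)$ is a probability over the auxiliary uniform variable only, so after also averaging over $(s,b)\sim\mathcal{P}$ the first term equals $\mathbb{P}_{U,(s,b)}(s \le U \le p,\, q \le b)$. The second term is read the same way.

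\textbf{Step 1 (the $\GFT_t$ pointwise identity).} Fix $(s,b)\in[0,1]^2$ and $(p,q)\in[0,1]^2$, and assume a trade occurs, i.e., $s\le p$ and $b\ge q$. Write
\[
b-s=(p-s)+(b-q)+(q-p).
\]
This holds for any relative order of $p$ and $q$, since it is an algebraic identity. If no trade occurs, both sides are zero once every term carries the indicator $\mathbb{I}(s\le p)\mathbb{I}(b\ge q)$. Hence
\[
\GFT_t(p,q)=(p-s)\,\mathbb{I}(s\le p)\mathbb{I}(q\le b)+(b-q)\,\mathbb{I}(s\le p)\mathbb{I}(q\le b)+\Pro_t(p,q).
\]

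\textbf{Step 2 (replacing lengths by uniform probabilities).} Since $U$ is uniform on $[0,1]$ and $0\le s\le p\le 1$, we have $p-s=\mathbb{P}_U(s\le U\le p)$. Therefore
\[
(p-s)\,\mathbb{I}(s\le p)\mathbb{I}(q\le b)=\mathbb{P}_U(s\le U\le p,\ q\le b).
\]
When $s>p$ the event $\{s\le U\le p\}$ is empty, so the identity also holds without the indicator $\mathbb{I}(s\le p)$. Symmetrically,
\[
(b-q)\,\mathbb{I}(s\le p)\mathbb{I}(q\le b)=\mathbb{P}_V(s\le p,\ q\le V\le b).
\]

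\textbf{Step 3 (taking expectations).} Take expectations over $(s,b)\sim\mathcal{P}$, independent of $U$ and $V$. By Fubini/Tonelli, which applies because all quantities are bounded and nonnegative except the profit term, which is bounded, the left-hand side becomes $\GFT(p,q)$ and the last term becomes $\Pro(p,q)$ by definition. This gives the stated decomposition. It also recovers the integral form displayed before the lemma, since
\[
\mathbb{E}_U[\mathbb{P}(s\le U\le p, q\le b)]=\int_0^p \mathbb{P}(s\le x, q\le b)\,dx .
\]

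\textbf{Main obstacle.} There is no real analytic difficulty. The only delicate points are the bookkeeping of the indicator events, so that the case $s>p$ or $b<q$ contributes zero in every term, and checking that the identity does not require $p\le q$. Non-budget-balanced pairs with $q<p$ are exactly where $\Pro$ becomes negative, and the algebraic identity in Step 1 covers them unchanged. Bounded density is not needed for this lemma.
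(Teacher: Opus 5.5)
Your proposal is correct and follows essentially the same route as the paper. Both arguments condition on $(s,b)$ to turn the uniform-variable terms into $(p-s)\,\mathbb{I}(s\le p,\, q\le b)$ and $(b-q)\,\mathbb{I}(s\le p,\, q\le b)$, then use $(p-s)+(b-q)+(q-p)=b-s$; applying the identity pointwise before taking expectations, as you do, rather than after, is immaterial.
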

The lemma is a direct consequence of the integral decomposition and the properties of the uniform distribution. For all $(p,q) \in [0,1]^2$, we introduce $L(p,q)$ and $R(p,q)$ such that:
\begin{align*}
&L(p,q)\coloneq \mathbb{E}_{U,(s,b)\sim \mathcal{P}}[\mathbb{P}(s \le U \le p, q \le b)]\\
&R(p,q)\coloneq \mathbb{E}_{V,(s,b)\sim \mathcal{P}}[\mathbb{P}(s \le p, q \le V \le b)].
\end{align*}
This decomposition is particularly useful, as it separates the bandit-feedback component, namely the profit term, from the non-directly observable components $L$ and $R$. The goal of the pure exploration phase is therefore to efficiently learn estimates of these two quantities for each $(p,q)\in \mathcal{G}_K$.

The decomposition above allows us to reduce the original task of estimating $K^2$ quantities---one for each pair of prices in the grid $\mathcal{G}_K$---to estimating only $2K$ quantities of the form $\mathbb{P}(s \le U, q \le b)$ and $\mathbb{P}(s \le p, V \le b)$. These estimates can be combined with the profit term to build an estimate of the GFT for all the $K^2$ price pairs. This reduction from $K^2$ to $2K$ is a crucial step, as the regret depends linearly on this quantity.
This is achieved by observing that
\[
\mathbb{I}(s \le U \le p, q \le b) = \mathbb{I}(s \le U, q \le b)\mathbb{I}(U \le p).
\]
This implies that an unbiased estimator $\widehat{L}$ can be built simultaneously for all $p$ by choosing a pair of prices $(U,q)$, with $U$ sampled uniformly from $[0,1]$, and multiplying the observed feedback bit $\mathbb{I}(s \le U)\mathbb{I}(q \le b)$ by the indicator $\mathbb{I}(U \le p)$, which can be computed for all $p$ since $U$ is known. Naturally, an analogous argument applies to constructing an unbiased estimator $\widehat{R}$.
Finally, by applying Hoeffding's inequality, we obtain the following lemma.
\begin{restatable}{lemma}{LRest}
    \label{lemma: stima L R}
Let $\delta \in (0,1)$. If \Cref{alg:pure_exploration} terminates before round $T$, then, with probability at least $1-\delta$:
\begin{align*}
\left|\widehat{L}(p,q)-L(p,q)\right| \le \sqrt{\frac{\log\!\left(\frac{4K^2}{\delta}\right)}{N}} \\
\left|\widehat{R}(p,q)-R(p,q)\right| \le \sqrt{\frac{\log\!\left(\frac{4K^2}{\delta}\right)}{N}}
\end{align*}
for all $(p,q)\in \mathcal{G}_K$, and it runs for $2KN$ rounds.
\end{restatable}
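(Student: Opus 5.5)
The plan is to make the sample-reuse construction of \Cref{alg:pure_exploration} explicit, show that the resulting estimators are unbiased with bounded summands, and then apply Hoeffding's inequality with a union bound over the $2K^2$ estimates. Concretely, I take the algorithm to work as follows. For each grid value $q\in\{j/(K-1)\}_{j}$, it plays $N$ rounds with prices $(U_t,q)$, where $U_t\sim\mathrm{Unif}[0,1]$ is drawn fresh. For each grid value $p$, it plays $N$ rounds with prices $(p,V_t)$, where $V_t\sim\mathrm{Unif}[0,1]$ is drawn fresh. This makes $2K$ blocks of $N$ rounds each, so the claim that the phase runs for $2KN$ rounds is immediate from the construction. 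The estimators are
\[
\widehat L(p,q)=\frac1N\sum_{t\in\mathcal{T}_q}\mathbb{I}(s_t\le U_t)\,\mathbb{I}(b_t\ge q)\,\mathbb{I}(U_t\le p),
\qquad
\widehat R(p,q)=\frac1N\sum_{t\in\mathcal{T}'_p}\mathbb{I}(s_t\le p)\,\mathbb{I}(b_t\ge V_t)\,\mathbb{I}(V_t\ge q).
\]
Here $\mathcal{T}_q$ and $\mathcal{T}'_p$ are the blocks assigned to $q$ and $p$. Each summand is the observed one-bit feedback times an indicator the learner can compute, since $U_t$ and $V_t$ are known to it.

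The first step is unbiasedness. Because $U_t$ is independent of $(s_t,b_t)$, and using the identity $\mathbb{I}(s\le U\le p,\,q\le b)=\mathbb{I}(s\le U,\,q\le b)\,\mathbb{I}(U\le p)$, each summand of $\widehat L(p,q)$ has expectation exactly $L(p,q)$. This is the same object that appears in \Cref{lemma: L R decomposition}. The same argument gives $\mathbb{E}[\widehat R(p,q)]=R(p,q)$.

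The second step is concentration. Within a fixed block the rounds are i.i.d., because the pairs $(s_t,b_t)$ are i.i.d.\ and the $U_t$ are fresh, and each summand lies in $[0,1]$. Two-sided Hoeffding then gives, for a fixed $(p,q)$, that $|\widehat L(p,q)-L(p,q)|> \varepsilon$ with probability at most $2e^{-2N\varepsilon^2}$. Setting $\varepsilon=\sqrt{\log(4K^2/\delta)/N}$ makes this at most $2(\delta/4K^2)^2\le \delta/(2K^2)$. A union bound over the $K^2$ pairs for $\widehat L$ and the $K^2$ pairs for $\widehat R$ gives total failure probability at most $\delta$. Hoeffding actually yields a constant $1/2$ inside the square root, so the stated bound holds with slack.

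The one subtle point is that the same samples are shared across many pairs $(p,q)$, so the $K^2$ estimates are strongly correlated. This causes no problem, because the union bound does not need independence across pairs; Hoeffding only needs independence across rounds for each fixed pair. I also need to check that the block lengths are fixed in advance rather than data-dependent, so that Hoeffding applies without a stopping-time argument. The hypothesis that the algorithm terminates before round $T$ ensures all $2KN$ rounds actually occur, so each estimator averages exactly $N$ samples.
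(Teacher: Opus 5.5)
Your proposal is correct and follows essentially the same argument as the paper. Each per-round summand is an unbiased, $[0,1]$-valued estimate of $L(p,q)$ (resp.\ $R(p,q)$) via the factorization $\mathbb{I}(s\le U\le p,\,q\le b)=\mathbb{I}(s\le U,\,q\le b)\,\mathbb{I}(U\le p)$; Hoeffding is applied for each fixed pair, a union bound over the $2K^2$ estimates (the paper takes $\delta'=\delta/(2K^2)$) gives the claim, and the $2KN$ round count is read off the construction. Your remarks that correlations across pairs do not affect the union bound, and that Hoeffding leaves a factor $1/2$ of slack inside the square root, are accurate refinements the paper leaves implicit.
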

Notice that it is impossible to compute a good estimation of the profit over the grid $\mathcal{G}_K$, as it would require to select $N$ times all the $K^2$ pairs of prices. In contrast, our characterization allows us to use a single sample to compute and estimation of the GFT of $K$ differently couples. This reduces the number of samples per pair to $N/K$, greatly reducing the length of the pure exploration phase.

\begin{algorithm}[!t]\caption{\textsf{Simultaneous-Exploration}}\label{alg:pure_exploration}
    \begin{algorithmic}[1]
        \REQUIRE Grid $\mathcal{G}_K$, $N \in \mathbb{N}$
        %\STATE $t\gets \text{Current round}$
        \FORALL{$i \in \{0,\ldots,K-1\}$}
        %\STATE $\tau_0 \gets \textnormal{current round} $, $t\gets \tau_0$
        \STATE $q \gets \frac{i}{K-1}$, $\tau \gets t+N$
        %\WHILE{$t-\tau_0\le N, t\le T$}
       \WHILE{$t\le \tau$} %\wedge t \leq T$ \mat{potrebbe aver senso ignorare sta cosa che potrebbe finire il tempo}}
       %\FOR{$t=\tau,\ldots,\tau+N$}
        \STATE Sample $U_t$ uniformly in $[0,1]$, post prices $(U_t,q)$ and observe $\mathbb{I}(s_t\le U_t,q\le b_t)$
        %\STATE Post prices $(U_t,q)$
       % \STATE Observe $\mathbb{I}(s_t\le U_t,q\le b_t)$
       \FORALL{$p: (p,q)\in \mathcal{G}_K$}
        \STATE $\widehat{L}(p,q)\hspace{-0.1cm}\gets \hspace{-0.1cm}\widehat{L}(p,q)\hspace{-0.09cm}+\hspace{-0.08cm} \frac{1}{N}\mathbb{I}(s_t\hspace{-0.05cm}\le\hspace{-0.05cm} U_t,q\hspace{-0.05cm}\le b_t) \mathbb{I}(U_t\hspace{-0.03cm}\le \hspace{-0.05cm}p)$ 
        \ENDFOR
        \STATE $t\gets t+1$
        \ENDWHILE
        %\STATE $\widehat{L}(p,q)\gets \widehat{L}(p,q)/N \;\;\forall p: (p,q)\in \mathcal{G}_K$
        %\STATE $\tau_0 \gets \textnormal{current round}$
        \ENDFOR    
        \FORALL{$i \in \{0,\ldots,K-1\}$}
        \STATE $p \gets \frac{i}{K-1}$, $\tau \gets t+N $
        %\WHILE{$t-\tau_0\le N, t\le T$}
        \WHILE{$t\le \tau$ }%\wedge t \leq T$}
        \STATE Sample $V_t$ uniformly in $[0,1]$, post prices $(p,V_t)$ and observe $\mathbb{I}(s_t\le p,q\le V_t)$
        \FORALL{$q:(p,q)\in \mathcal{G}_K$}
        \STATE $\widehat{R}(p,q)\hspace{-0.05cm}\gets \hspace{-0.05cm}\widehat{R}(p,q)\hspace{-0.07cm}+\hspace{-0.07cm}\frac{1}{N}\hspace{-0.04cm}\mathbb{I}(s_t\hspace{-0.05cm}\le \hspace{-0.05cm}p_t,\hspace{-0.05cm}b\hspace{-0.05cm}\ge\hspace{-0.05cm} V_t) \mathbb{I}(V_t\hspace{-0.05cm}\ge\hspace{-0.05cm}q)$ 
        \ENDFOR
        \STATE $t\gets t+1$
        \ENDWHILE
       % \STATE $\tau_0 \gets \textnormal{current round}$
        \ENDFOR
         \STATE \textbf{return} functions  $\widehat{L}(\cdot,\cdot)$ and $\widehat{R}(\cdot,\cdot)$
    \end{algorithmic}
\end{algorithm}

\section{Constrained Bandit Optimization}\label{sec: bandit}

In the previous section, we isolated the components of the GFT that do \emph{not} provide bandit feedback, and we showed how it is possible to learn those components uniformly and efficiently. We now focus on the profit term. This term has both advantages and drawbacks. Indeed, an explore-and-commit framework is ineffective in this setting, since the profit of each price must be learned independently. However, the presence of bandit feedback allows for simultaneous exploration and exploitation. We exploit this property by using a UCB-like approach.
We face an additional challenge that we have ignored for most of the paper. Indeed, unlike all previous works, we must work with distributions subject to \emph{unknown constraints}. We handle these constraints by using an optimistic estimation. Here, our profit-collection phase plays a crucial role, by compensating for the loss in profit due to the optimistic constraints.

Formally, \Cref{alg: total} reformulates Problem \ref{Pr:optK} as
\begin{align*}
    \max_{\gamma \in \Delta(\mathcal{G}_K)} \ &\mathbb{E}_{(p,q)\sim \gamma}[L(p,q)+R(p,q)+\Pro(p,q)] \quad \textnormal{s.t.}\\
    & \mathbb{E}_{(p,q)\sim \gamma}[\Pro(p,q)]\ge 0,
\end{align*}
where the $L$ and $R$ components can be interpreted as action-specific biases for which we already have accurate estimates, while $\Pro(p,q)$ represents the utility component equipped with bandit feedback. Thus, we can tackle this problem as a standard bandit-feedback constrained optimization problem, adjusted through bias terms, and solve it by using an optimistic approach.
In particular, at each interaction, the algorithm updates its optimistic estimates of the profit as
\begin{align}
\overline{\Pro}_t(p,q)=&\frac{1}{N_t(p,q)} \sum_{\tau=\tau_0}^t \Pro_t(p,q)\mathbb{I}(p_\tau=p,q_\tau=q) \nonumber\\
    & + \min\Bigg\{1,\sqrt{\frac{2 \log\left(\frac{6TK^2}{\delta}\right)}{N_t(p,q)}}\Bigg\},\label{eq: opt profit}
\end{align}
where $\tau_0$ is the round in which the algorithm is initialized and $N_t(p,q)$ is defined as the number of times in which the algorithm has chosen the prices $(p,q)$ up to round $t$.

In \Cref{alg: constrained ucb bias}, we provide the complete pseudocode of the procedure. The following lemma provides its guarantees.
\begin{restatable}{lemma}{lemmaUCB}\label{lemma: UCB}
    Let $\gamma\in \Delta(\mathcal{G}_K)$ be a distribution over $\mathcal{G}_K$ such that $\mathbb{E}_{(p,q)\sim \gamma}[\textnormal{\Pro}(p,q)]\ge 0$. If \Cref{alg: constrained ucb bias} is initialized at round $\tau_0\in [T]$, with probability at least $1-\delta$:
    \begin{align*}
    \sum_{t=\tau_0}^T &\left(\sum_{(p,q)\in \mathcal{G}_K}\gamma(p,q)r(p,q)-r(p_t,q_t)\right)\\
    &\hspace{4cm}\le \widetilde{\mathcal{O}}\left(K\sqrt{T\log(1/\delta)}\right),\\
    &\sum_{t=\tau_0}^T\textnormal{\Pro}(p_t,q_t)\ge -\widetilde{\mathcal{O}}\left(K\sqrt{T\log(1/\delta)}\right),
    \end{align*}
    where we let $r(p,q)\coloneq \overline{R}(p,q)+ \overline{L}(p,q) + \textnormal{\Pro}(p,q)$.
    %for every $(p,q)\in \mathcal{G}_K$. 
    %
   % In addition, under the same event \albo{non è chiaro a quale event si fa riferimento}
    %\textnormal{\[\sum_{t=\tau_0}^T\Pro(p_t,q_t)\ge -\widetilde{\mathcal{O}}\left(K\sqrt{T\log(1/\delta)}+\frac{T}{K}\right).\]}
\end{restatable}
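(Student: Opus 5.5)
We follow the standard optimistic analysis for bandits with unknown constraints, as in constrained UCB / bandits with knapsacks. We assume \Cref{alg: constrained ucb bias} plays as follows. At each round $t$ it solves the optimistic linear program
\[
\gamma_t\in\argmax_{\gamma\in\Delta(\mathcal{G}_K)} \mathbb{E}_{(p,q)\sim\gamma}\big[\overline{R}(p,q)+\overline{L}(p,q)+\overline{\Pro}_t(p,q)\big]\quad\text{s.t.}\quad \mathbb{E}_{(p,q)\sim\gamma}[\overline{\Pro}_t(p,q)]\ge 0,
\]
and then samples $(p_t,q_t)\sim\gamma_t$.

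\textbf{Step 1: clean event.} Apply Hoeffding's inequality to each of the $K^2$ arms and each of the at most $T$ possible counts, with a union bound. The confidence radius in \eqref{eq: opt profit} uses $\log(6TK^2/\delta)$, so with probability at least $1-\delta/3$ we have, simultaneously for all $t$ and all $(p,q)\in\mathcal{G}_K$,
\[
\Pro(p,q)\le\overline{\Pro}_t(p,q)\le \Pro(p,q)+2\min\{1,\sqrt{2\log(6TK^2/\delta)/N_t(p,q)}\}.
\]
(Rewards are in $[-1,1]$, so the constant in the radius may need adjustment.) On this event, the comparator $\gamma$ satisfies $\mathbb{E}_\gamma[\overline{\Pro}_t]\ge\mathbb{E}_\gamma[\Pro]\ge0$, so it is feasible for every optimistic program. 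Optimality of $\gamma_t$ then gives
\[
\sum_{(p,q)}\gamma(p,q)r(p,q)\le \mathbb{E}_\gamma[\overline{R}+\overline{L}+\overline{\Pro}_t]\le \mathbb{E}_{\gamma_t}[\overline{R}+\overline{L}+\overline{\Pro}_t].
\]

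\textbf{Step 2: reduce to bonus sums.} The bias terms $\overline{R},\overline{L}$ are fixed and known, so they cancel exactly. The per-round regret is therefore at most $\mathbb{E}_{\gamma_t}[\overline{\Pro}_t-\Pro]$. For the constraint, feasibility of $\gamma_t$ gives
\[
\mathbb{E}_{\gamma_t}[\Pro]\ge -\mathbb{E}_{\gamma_t}[\overline{\Pro}_t-\Pro].
\]
Both claims thus reduce to bounding $\sum_t \mathbb{E}_{\gamma_t}[b_t(p,q)]$, where $b_t$ is the bonus width. We also need to replace $\mathbb{E}_{\gamma_t}[\cdot]$ by realized quantities:
\begin{itemize}
\item $\mathbb{E}_{\gamma_t}[r]$ is replaced by $r(p_t,q_t)$;
\item $\mathbb{E}_{\gamma_t}[\Pro]$ is replaced by $\Pro(p_t,q_t)$;
\item $\mathbb{E}_{\gamma_t}[b_t]$ is replaced by $b_t(p_t,q_t)$.
\end{itemize}
Azuma--Hoeffding on these bounded martingale differences costs $\mathcal{O}(\sqrt{T\log(1/\delta)})$ each, with probability $1-\delta/3$ each.

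\textbf{Step 3: pigeonhole.} We have
\[
\sum_t b_t(p_t,q_t)\le \sum_{(p,q)}\sum_{n=1}^{N_T(p,q)} 2\sqrt{2\log(6TK^2/\delta)/n}\le \mathcal{O}\Big(\sqrt{\log(TK^2/\delta)}\sum_{(p,q)}\sqrt{N_T(p,q)}\Big).
\]
By Cauchy--Schwarz over the $K^2$ arms this is $\mathcal{O}(\sqrt{K^2T\log(TK/\delta)})=\widetilde{\mathcal{O}}(K\sqrt{T\log(1/\delta)})$. There is one indexing subtlety: the estimate used at round $t$ is based on counts before round $t$. The first pull of each arm is covered by the $\min\{1,\cdot\}$ cap and contributes at most $K^2\le K\sqrt{T}$ whenever $K\le\sqrt T$, which holds for the eventual parameter choice. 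Combining Steps 1--3 yields both inequalities.

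\textbf{Main obstacle.} The delicate point is Step 1's use of a \emph{randomized} policy under an unknown constraint. We need the optimistic program to be always feasible, which the clean event guarantees via $\gamma$. We also need the constraint violation to be charged to the same bonus sum. This works because feasibility is checked against $\overline{\Pro}_t$, which overestimates $\Pro$ by exactly the bonus. The rest is bookkeeping of failure probabilities, split as $\delta/3$ each.
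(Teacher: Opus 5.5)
Your proposal is correct and is essentially the paper's argument. Both use a Hoeffding clean event that makes $\gamma$ feasible for every optimistic program, then optimality of the LP solution (and its feasibility) to reduce both the regret and the profit deficit to the sum of confidence widths along the played arms. Both then pass to realized quantities with Azuma--Hoeffding and finish with a pigeonhole/Cauchy--Schwarz bound of order $K\sqrt{T\log(TK^2/\delta)}$; your treatment of first pulls (zero counts) is more careful than the paper's.

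Two cosmetic points. You have four high-probability events (the clean event plus three Azuma steps), so split $\delta$ into quarters as the paper does, not thirds. And no change to the radius is needed: for a fixed arm the realized profit lies in $\{0,q-p\}$, an interval of length at most $1$, which also justifies the cap at $1$.
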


%In particular, \Cref{lemma: UCB} holds for a $\gamma_K^*$ such that $\mathbb{E}_{(p,q)\sim \gamma_K^*}[\GFT(p,q)]=\OPT_K$.\albo{Questa frase buttata li così non dice nulla, dire perchè la si sta dicendo.}

\begin{algorithm}[!t]\caption{\textsf{Explore-Exploit}}\label{alg: constrained ucb bias}
    \begin{algorithmic}[1]
        \REQUIRE $\mathcal{G}_K$, functions $\widehat{L}$ and $\widehat{R}$, $N, T \in \mathbb{N}$, $\delta \in (0,1)$
        \FORALL{$(p,q)\in \mathcal{G}_K$}
        \STATE $\overline{R}(p,q)\gets \widehat{R}(p,q)+\sqrt{\frac{\log(\frac{4K^2}{\delta})}{N}}$ 
        \STATE $\overline{L}(p,q)\gets \widehat{L}(p,q)+\sqrt{\frac{\log(\frac{4K^2}{\delta})}{N}}$ 
        %\STATE $t\gets \textnormal{Current step}$
        \STATE $\overline{\Pro}_t(p,q)\gets 1$ 
        \STATE $\gamma_t(p,q)\gets \frac 1 {|\mathcal{G}_K|}$
        \ENDFOR
        \STATE $\tau_0 \gets t$
        %\STATE $\overline{R}(p,q)\gets \widehat{R}(p,q)+\sqrt{\frac{\log(\frac{12K^2}{\delta})}{N}}$ $\forall (p,q)\in \mathcal{G}_K$
        %\STATE $\tau_0\gets t$
        %\STATE $\overline{L}(p,q)\gets \widehat{L}(p,q)+\sqrt{\frac{\log(\frac{12K^2}{\delta})}{N}}$ $\forall (p,q)\in \mathcal{G}_K$ 
        %\STATE $t\gets \textnormal{Current step}$
        %\STATE $\overline{\Pro}_t(p,q)\gets 1$ $\forall(p,q)\in \mathcal{G}_K$
        %\STATE $\gamma_t(p,q)\gets 1/\mathcal{G}_K \quad \forall(p,q)\in \mathcal{G}_K$
        \WHILE{$t\le T$}
        \STATE Sample $(p_t,q_t)\sim \gamma_t$ and observe $\Pro_t(p_t,q_t)$
        %\STATE $N_{t}(p_t,q_t)\gets N_{t-1}(p_t,q_t)+1 $, \\
        %$N_{t}(p_t,q_t)\gets N_{t-1}(p_t,q_t) \,\forall(p,q)\in \mathcal{G}_K\backslash \{(p_t,q_t)\}$
        %\STATE Update counters $N_t(\cdot,\cdot)$
        %\STATE $\widehat{Pro}_{t}(p_t,q_t)\gets \widehat{Pro}_{t-1}(p_t,q_t)+ {Pro}_{t}(p_t,q_t)$\\
        %$\widehat{Pro}_{t}(p,q)\gets \widehat{Pro}_{t-1}(p,q) \;\forall(p,q)\in \mathcal{G}_K$
        \STATE Update estimates $\overline{\Pro_t}(\cdot,\cdot)$  as in \Cref{eq: opt profit}%\mat{mettere fuori dall'algoritmo com equation e citare equazione qui?}
        %\begin{align*}
        %   \overline{\Pro_t}(p,q)\gets \frac{1}{N_t(p,q)}&\sum_{\tau=\tau_0+1}^t \Pro_t(p,q)\mathbb{I}(p_\tau=p,q_\tau=q)\\
         %  &+\min\bigg\{1,\sqrt{\frac{2}{N_t(p,q)} \log\left(\frac{6TK^2}{\delta}\right)}\bigg\}\quad \quad\forall(p,q)\in \mathcal{G}_K
        %\end{align*} 
        \STATE Update optimistic biased reward for all $(p,q)\in \mathcal{G}_K$:
        \begin{align*}
            r_t(p,q) \hspace{-0.05cm}\gets \hspace{-0.05cm}\overline{R}(p,q)\hspace{-0.05cm}+\hspace{-0.05cm}\overline{L}(p,q)\hspace{-0.05cm}+\hspace{-0.05cm} \overline{\Pro_t}(p,q) 
        \end{align*}
            \STATE  Compute: \begin{align*}
                \gamma_{t+1}\gets \argmax_{\gamma \in \Delta(\mathcal{G}_K)} & \sum_{(p,q)\in \mathcal{G}_K}r_t(p,q)\gamma(p,q) \quad \textnormal{s.t.}\\
                & \sum_{(p,q)\in \mathcal{G}_K}\overline{\Pro}_t(p,q)\gamma(p,q)\ge 0
            \end{align*}\label{line: lp ucb}
            \STATE $t\gets t+1$
        \ENDWHILE
    \end{algorithmic}
\end{algorithm}

\section{Putting Everything Together }\label{sec:putting}

Now, we are finally ready to combine the results on the three phases to prove \Cref{thm:final theorem}. In the following, we provide a proof sketch and postpone the proof to \Cref{app:main}.
%
%\begin{proof}[Proof Sketch of \Cref{thm:final theorem}]
%
To bound the regret, we decompose it into five components:
\begin{itemize}
    \item The regret incurred by restricting attention to the discretized price set $\mathcal{G}_K$, which is of order $T/K$.
    \item The regret $\widetilde{\mathcal{O}}(\beta + \sqrt{KT}+\frac{T}{K})$ in the profit collection phase, of order , where $\beta$ denotes the budget accumulated. We set $\beta = \widetilde{\mathcal{O}}(NK+\sqrt{K^2T\ln(1/\delta)})$ to account for both the profit loss during the pure exploration phase and the losses incurred by defining the space of feasible distributions through optimism.
    \item The regret of the pure exploration phase, which is bounded by the number of its rounds, namely $2KN$.
    \item The regret in the constrained bandit optimization phase, generated by using an optimistic estimate of $L$ and $R$ rather than their real values, of the order $\mathcal{O}(\frac{T}{\sqrt{N}})$.
    \item The regret incurred during the constrained bandit optimization phase with respect to the reward functions defined through the optimistic estimates of $L$ and $R$, which is proportional to the square root of the number of actions multiplied by the square root of the number of rounds, \emph{i.e.}, $\widetilde{\mathcal{O}}(\sqrt{K^2T\ln(1/\delta)})$.
\end{itemize}
Combining these terms, the total regret satisfies
\[
R_T\leq \widetilde{\mathcal{O}}\!\left(T/K+NK+\sqrt{K^2T\ln(1/\delta)}\right)
= \widetilde{\mathcal{O}}\left(T^{3/4}\right),
\]
with $N=\Theta(T^{1/2})$ and $K=\Theta(T^{1/4})$.
The optimistic approach guarantees that this bound holds with high probability; more precisely, the above bound holds with probability at least $1-\delta$, as stated in the theorem.

\newpage

\bibliographystyle{plainnat}
\bibliography{example_paper}

%%%%%%%%%%%%%%%%%%%%%%%%%%%%%%%%%%%%%%%%%%%%%%%%%%%%%%%%%%%%%%%%%%%%%%%%%%%%%%%
%%%%%%%%%%%%%%%%%%%%%%%%%%%%%%%%%%%%%%%%%%%%%%%%%%%%%%%%%%%%%%%%%%%%%%%%%%%%%%%
% APPENDIX
%%%%%%%%%%%%%%%%%%%%%%%%%%%%%%%%%%%%%%%%%%%%%%%%%%%%%%%%%%%%%%%%%%%%%%%%%%%%%%%
%%%%%%%%%%%%%%%%%%%%%%%%%%%%%%%%%%%%%%%%%%%%%%%%%%%%%%%%%%%%%%%%%%%%%%%%%%%%%%%
\newpage
\appendix
\onecolumn

\section{Omitted Proofs from \Cref{sec:why}}
\lowerbound*
\begin{proof}
We construct a family of instances parameterized by $\epsilon>0$ and $u\in[-1/16,1/16]$.
Each instance is characterized by a probability measure
\[
\mathbb{P}_{\mathcal{D}_{\epsilon,u}}:\mathcal{B}([0,1]^2)\to[0,1],
\]
defined by
\begin{align*}
\mathbb{P}_{\mathcal{D}_{\epsilon,u}}\bigl((1/8,\,3/8+u)\bigr) &= \tfrac14,\\
\mathbb{P}_{\mathcal{D}_{\epsilon,u}}\bigl((5/8+u,\,7/8)\bigr) &= \tfrac14,\\
\mathbb{P}_{\mathcal{D}_{\epsilon,u}}\bigl((3/8+u-\epsilon,\,3/8+u-\epsilon)\bigr) &= \tfrac14,\\
\mathbb{P}_{\mathcal{D}_{\epsilon,u}}\bigl((5/8+u-\epsilon,\,5/8+u-\epsilon)\bigr) &= \tfrac14.
\end{align*}

We partition the $(p,q)$-space into regions and characterize, for each region, the resulting
GFT and profit.

\begin{itemize}
\item \textbf{Region I:}
\[
(p,q)\in (3/8+u-\epsilon,\,3/8+u]\times[5/8+u,\,5/8+u+\epsilon).
\]
In this region, trades corresponding to the positive-GFT valuation pairs
$(1/8,3/8+u)$ and $(5/8+u,7/8)$ are realized, while trades with zero GFT are excluded.
Therefore,
\begin{align*}
\GFT(p,q) &=  \tfrac{2}{16},\\
\Pro(p,q) &\ge -\tfrac{2}{8}\cdot\tfrac12.
\end{align*}

\item \textbf{Region II:}
\[
(p,q)\in [0,\,5/8+v)\times(3/8+u,\,1].
\]
In this region, no trade is executed. Hence,
\begin{align*}
\GFT(p,q) &= 0,\\
\Pro(p,q) &= 0.
\end{align*}

\item \textbf{Region III:}
\[
(p,q)\in [5/8+u,\,1]\times[0,\,3/8+u]
\setminus (3/8+u-\epsilon,\,3/8+u]\times[5/8+u,\,5/8+u+\epsilon).
\]
Here, all positive-GFT trades are realized, but at least some zero-GFT trades are also executed,
resulting in a loss in profit. In particular,
\begin{align*}
\GFT(p,q) &= \tfrac{2}{16},\\
\Pro(p,q) &\le -\tfrac{2}{8}\cdot\tfrac{3}{4}.
\end{align*}

\item \textbf{Region IV:}
For any other choice of $(p,q)$, we have
\begin{align*}
\GFT(p,q) &\le \tfrac{1}{16}+\frac{u}{4},\\
\Pro(p,q) &\le \left(\tfrac{2}{8}+|u|\right)\cdot\tfrac14.
\end{align*}
\end{itemize}

As a consequence, there exists a constant $c>0$ such that the optimal policy—mixing between
Region~I and Region~IV achieves a GFT that exceeds by at least $c$ that of any
feasible policy that never selects prices from Region~I.

This follows from the observation that playing with constant probability outside Region~I either
directly reduces the GFT or induces negative profit. Any such loss in profit must then be
compensated by increasing the probability of selecting prices from Region~IV that are linearly suboptimal in
terms of GFT, leading to a net loss bounded away from zero.

Finally, by standard information-theoretic arguments, identifying an interval of length $\epsilon$
within an interval of length $1/8$ requires at least $\Omega(\log(1/\epsilon))$ rounds in the worst
case. Given any time horizon $T$, taking $\epsilon$ sufficiently small, the algorithm is unable to reliably identify
Region~I with high probability, and consequently incurs linear regret:
\[
\mathbb{E}\left[R_T\right] \ge \Omega(T).
\]

\end{proof}

\section{Omitted Proofs from \Cref{sec:BD}} \label{app:BD}

\gridprojection*

\begin{proof}
    Let $(p,q)\in[0,1]^2$, and let $(\Pi_K^s(p),\Pi_K^b(q))$ denote the nearest point in the grid $\mathcal{G}_K$ such that
$\Pi_K^s(p)\ge p$ and $\Pi_K^b(q)\le q$.

We decompose the gain-from-trade as follows:
\begin{align*}
    GFT(p,q)
    &= \mathbb{E}\big[(b-s)\mathbb{I}(s\le p,\, b\ge q)\big] \\
    &= \mathbb{E}\big[(b-s)\mathbb{I}(s\le \Pi_K^s(p),\, b\ge \Pi_K^b(q))\big] \\
    &\quad - \mathbb{E}\big[(b-s)\mathbb{I}(p< s\le \Pi_K^s(p),\, b< q)\big] \\
    &\quad - \mathbb{E}\big[(b-s)\mathbb{I}(s\le p,\, \Pi_K^b(q)\le b< q)\big] \\
    &\quad + \mathbb{E}\big[(b-s)\mathbb{I}(p\le s< \Pi_K^s(p),\, \Pi_K^b(q)< b\le q)\big].
\end{align*}

Using $|b-s|\le 1$ and bounded density, we obtain
\begin{align*}
    GFT(p,q)
    &\le \mathbb{E}\big[(b-s)\mathbb{I}(s\le \Pi_K^s(p),\, b\ge \Pi_K^b(q))\big] \\
    &\quad + \mathbb{P}(p\le s< \Pi_K^s(p),\, q\le b)
    + \mathbb{P}(s< p,\, \Pi_K^b(q)< b\le q) \\
    &= GFT(\Pi_K^s(p),\Pi_K^b(q)) + \frac{2\sigma}{K}.
\end{align*}

Given any feasible distribution $\gamma\in\Delta([0,1]^2)$, define the discretized distribution
$\hat{\gamma}_K$ by
\begin{align*}
    \hat{\gamma}_K(p,q)= \mathbb{P}_{(x,y)\sim \gamma}\left((x,y)\in (p-\frac{1}{K},p]\times [q,q+\frac{1}{K})\right) \quad \forall(p,q)\in \mathcal{G}_K
\end{align*}

Then,
\begin{align*}
    \mathbb{E}_{(p,q)\sim\gamma}[GFT(p,q)]
    &\le \mathbb{E}_{(p,q)\sim\gamma}\left[GFT(\Pi_K^s(p),\Pi_K^b(q))+\frac{2\sigma}{K}\right]\\
    &\le \mathbb{E}_{(p,q)\sim\hat{\gamma}_K}[GFT(p,q)]
    + \frac{2\sigma}{K}.
\end{align*}

A similar argument yields
\begin{align*}
    \Pro(p,q)
    &\le \Pro(\Pi_K^s(p),\Pi_K^b(q))
    + \mathbb{P}(p\le s< \Pi_K^s(p),\, q\le b) \\
    &\quad + \mathbb{P}(s< p,\, \Pi_K^b(q)< b\le q) \\
    &\le \Pro(\Pi_K^s(p),\Pi_K^b(q)) + \frac{2\sigma}{K},
\end{align*}
and therefore
\begin{align*}
    \mathbb{E}_{(p,q)\sim\gamma}[\Pro(p,q)]
    \le \mathbb{E}_{(p,q)\sim\hat{\gamma}_K}[\Pro(p,q)]
    + \frac{2\sigma}{K}.
\end{align*}

By feasibility of $\gamma$,
\[
\mathbb{E}_{(p,q)\sim\hat{\gamma}_K}[\Pro(p,q)]
\ge -\frac{2\sigma}{K}.
\]
\end{proof}

%\LPGRID*

\LPGRID*

\begin{proof}

Let
\[
(p^+,q^+)\in\arg\max_{(p,q)\in\mathcal{G}_K}\Pro(p,q),
\]
and define
\[
\gamma_K=(1-\alpha)\hat{\gamma}_K+\alpha\delta_{(p^+,q^+)},
\quad
\alpha\coloneqq \frac{2\sigma/K}{\Pro(p^+,q^+)},
\]
where $\delta_{(p^+,q^+)}$ is a distribution that assign probability $1$ to the couple  of prices $(p^+,q^+)$.
%Notice that $\gamma_K$ is a well defined distribution over $\mathcal{G}_K$ as long as $\Pro(p^+,q^+)\ge \frac{2\sigma}{K}$, which implies $\alpha\in [0,1]$. 

Therefore, we can study two separate cases, $\Pro(p^+,q^+)\ge \frac{2\sigma}{K}$ and $\Pro(p^+,q^+)< \frac{2\sigma}{K}$.
\paragraph{Case 1: $\Pro(p^+,q^+)\ge \frac{2\sigma}{K}$.}

If $\Pro(p^+,q^+)\ge \frac{2\sigma}{K}$, then $\alpha$ belongs to $[0,1]$ and $\gamma_K$ is well defined. 
%First, we consider the case in which $\gamma_K$ is well defined.
In addition, by construction,
\begin{align*}
    \mathbb{E}_{(p,q)\sim\gamma_K}[\Pro(p,q)]
    &= (1-\alpha)\mathbb{E}_{(p,q)\sim\hat{\gamma}_K}[\Pro(p,q)]
    + \alpha\Pro(p^+,q^+) \\
    &\ge (1-\alpha)\frac{-2\sigma}{K}
    + \frac{2\sigma}{K}\ge 0.
\end{align*}

and we can bound the expected gain-from-trade using the followings :
\begin{itemize}
    \item Using \Cref{lemma: profit grid max} and bounded density, one has
    \begin{align*}
    \OPT &\le 16 \log T \cdot \max_{(p,q)\in \mathcal{F}_K} \Pro(p,q) + \frac{10 }{K} \\
    & \le 16 \log T \cdot\max_{(p,q)\in \mathcal{G}_K} \Pro(p,q) +16 \log T \cdot\frac{2\sigma}{K}+ \frac{10 }{K}\\
    & = 16 \log T \cdot \Pro(p^+,q^+) +16 \log T \cdot\frac{2\sigma}{K}+ \frac{10 }{K}
    \end{align*}
    \item By definition of $\alpha$,
    \[
    \alpha\Pro(p^+,q^+)=\frac{2\sigma}{K}.
    \]
\end{itemize}

Therefore,
\begin{align*}
    OPT-OPT_K
    &\le (1-\alpha)\left(\frac{-2\sigma}{K}\right)
    + \alpha\bigl(OPT-\GFT(p^+,q^+)\bigr) \\
    &\le \frac{-2\sigma}{K}
    + \alpha OPT \\
    &\le \mathcal{O}\left(\log(T)\frac{1}{K}\right).
\end{align*}

\paragraph{Case 2: $\Pro(p^+,q^+)< \frac{2\sigma}{K}$.}
Finally, to conclude the proof, we study the case in which $\Pro(p^+,q^+)< \frac{2\sigma}{K}$, and therefore $\gamma_K$ is not well defined. 

In this case, however, we observe that also the optimum must be small, indeed
\begin{align*}
    \OPT &\le  16 \log T \cdot \Pro(p^+,q^+) +16 \log T \cdot\frac{2\sigma}{K}+ \frac{10 }{K}\\
    & \le 16 \log T \cdot\frac{2\sigma}{K} + +16 \log T \cdot\frac{2\sigma}{K}+ \frac{10 }{K}.
    \end{align*}
Therefore, to conclude the proof it is sufficient to observe that 
\begin{align*}
    OPT-OPT_K
    &\le OPT \le \mathcal{O}\left(\log(T)\frac{1}{K}\right),
\end{align*}
since $\OPT_K\ge 0$, as choosing any price on the diagonal $(p=q)$ cannot yeld negative GFT.
\end{proof}

\section{Omitted Proofs from \Cref{SEC: profit max}}\label{APP: profit max}

In this section we present the omitted details from \Cref{SEC: profit max} .

As explained in the main paper, the profit maximization phase is almost identical to the one introduced by \cite{bernasconi2024no}. 
Their construction is based on defining a $\mathcal{O}(\log_2(T)K)$ dimensional grid, over which we can instantiate a regret minimizer algorithm with the goal of maximizing the Profit. Indeed, the profit benefits from bandit feedback, making maximizing the profit on a grid a Multi-armed Bandit problem. 
%Therefore, as long as the grid $\mathcal{F}_K$ on which we instantiate the regret minimizer guarantees that the arm that maximize the profit is "near enough" in terms of $GFT$ from the optimum.

First, we define the grid $\mathcal{F}_K$ as the union of two different grids $\mathcal{F}_K^-$ and $\mathcal{F}_K^+$, i.e. :

\begin{align*}
    &\mathcal{F}_K^-\coloneq \left\{(q-2^{-i},q): q \text{ s.t. }(q,q)\in \mathcal{G}_K\text{ and } i\in \{0,1,\ldots,\lceil \log(T) \rceil
    \}\right\}\cap [0,1]^2\\
    & \mathcal{F}_K^+\coloneq \left\{(p,p+2^{-i}): p \text{ s.t. }(p,p)\in \mathcal{G}_K \text{ and } i\in \{0,1,\ldots,\lceil \log(T) \rceil
    \}\right\}\cap [0,1]^2\\
    & \mathcal{F}_K\coloneq \mathcal{F}_K^+\cup \mathcal{F}_K^-
\end{align*}

This type of multiplicative grid was first introduced by \cite{bernasconi2024no}. With a similar analysis to theirs we can prove the following lemma, which correlates the optimum $GFT$ achievable by a feasible distribution on $\mathcal{G}_K$ with the optimal revenue obtainable on the grid $\mathcal{F}_K$ . 

\begin{algorithm}[h]\caption{Revenue Maximization [\cite{bernasconi2024no}]} \label{alg:profit}
    \begin{algorithmic}[1]
        \STATE Require grid $\mathcal{F}_K$, target budget $\beta$, number of rounds $T$
        \STATE Initialize regret  minimizer$\mathcal{A}$ as Exp3
        \STATE $B\gets 0$ \WHILE{$t\le T, B < \beta$}
        \STATE Choose $(p_t,q_t)\in \mathcal{F}_K$ suggested by $\mathcal{A}$
        \STATE Update $\mathcal{A}$ with $\Pro_t(p_t,q_t)$
        \STATE $B \gets B + \Pro_t(p_t,q_t)$
        \STATE $t\gets t+1$
        \ENDWHILE
    \end{algorithmic}
\end{algorithm}

\profitgridmax*

\begin{proof}
Let $p^*\in [0,1]$ be defined as 
\[p^*= \arg \max_{p\in [0,1]}GFT(p,p).\]
    Then,%with a analogous analysis to \cite{bernasconi2024no} 
    \begin{comment}
    \mat{ma qui non basta citare il paper vecchio. Cosa cambia?}
    \begin{align*}
         GFT(p^*,p^*)&= \mathbb{E}_{(s,b)\sim \mathcal{P}}[(b-s)\mathbb{I}(s\le p^*\le b)]\\
        & \mathbb{E}_{(s,b)\sim \mathcal{P}}[(b-p^*)\mathbb{I}(s\le p^*\le b)] + \mathbb{E}_{(s,b)\sim \mathcal{P}}[(p^*-s)\mathbb{I}(s\le p^*\le b)]\\
        & = \sum_{i=0}^{\log_2(T)}\mathbb{E}_{(s,b)\sim \mathcal{P}}[(b-p^*)\mathbb{I}(s\le p^*\le b)\mathbb{I}(p^*+2^{-i}\le b\le p^*+2^{-i+1})]\\
        &+ \mathbb{E}_{(s,b)\sim \mathcal{P}}[(b-p^*)\mathbb{I}(s\le p^*\le b)\mathbb{I}(p^*\le b\le p^*+\frac{1}{T})]\\
        & + \sum_{i=0}^{\log_2(T)}\mathbb{E}_{(s,b)\sim \mathcal{P}}[(p^*-s)\mathbb{I}(s\le p^*\le b)\mathbb{I}(p^*-2^{-i+1}\le s\le p^*-2^{-i})]\\
        & + \mathbb{E}_{(s,b)\sim \mathcal{P}}[(p^*-s)\mathbb{I}(s\le p^*\le b)\mathbb{I}(p^*-\frac{1}{T}\le s\le p^*)]\\
        & \le 2\sum_{i=0}^{\log_2(T)}\mathbb{E}_{(s,b)\sim \mathcal{P}}[2^{-i}\mathbb{I}(s\le p^*\le b)\mathbb{I}(p^*+2^{-i}\le b\le p^*+2^{-i+1})]+ \frac{1}{T} \\
        &+ 2\sum_{i=0}^{\log_2(T)}\mathbb{E}_{(s,b)\sim \mathcal{P}}[(2^{-i}\mathbb{I}(s\le p^*\le b)\mathbb{I}(p^*-2^{-i+1}\le s\le p^*-2^{-i})] + \frac{1}{T}\\
        & \le 2 \log_2(T) \max_{i} \left(2^{-i}\mathbb{P}(s\le p^*,b\ge p^*+2^{-i})\right) \\
        & +2 \log_2(T) \max_{i} \left(2^{-i}\mathbb{P}(s\le p^*-2^{-i},b\ge p^*)\right) +\frac{2}{T}\\
        & \le 4 \log_2(T) \max_{(p,q)\in \mathcal{F}_K} \Pro(p,q) + \frac{2}{T}.
    \end{align*}
\end{comment}
the proof come directly by combining Theorem 6.3 of \cite{bernasconi2024no},
    \begin{align*}
        OPT \le 2 \max_{p\in [0,1]} GFT(p,p),
    \end{align*}
    and Lemma 5.1 of \cite{bernasconi2024no}.

\end{proof}

\section{Omitted Proof from \Cref{sec:compressing}}

\LRdec*
\begin{proof}
By the law of total expectation
    \begin{align*}
        \mathbb{E}_{U,(s,b)\sim \mathcal{P}}[\mathbb{P}(s\le U\le p,q\le b)]& = \mathbb{E}_{(s,b)\sim \mathcal{P}}\bigg[\mathbb{E}_U[\mathbb{P}(s\le U\le p,q\le b)|s,b]\bigg]\\
        & = \mathbb{E}_{(s,b)\sim \mathcal{P}}[(p-s)\mathbb{I}(s\le p, q\le b)]
    \end{align*}
    and similarly
    \begin{align*}
        \mathbb{E}_{V,(s,b)\sim \mathcal{P}}[\mathbb{P}(s\le p,q\le V\le b)]& = \mathbb{E}_{(s,b)\sim \mathcal{P}}\bigg[\mathbb{E}_V[\mathbb{P}(s\le p,q\le V\le b)|s,b]\bigg]\\
        & = \mathbb{E}_{(s,b)\sim \mathcal{P}}[(b-q)\mathbb{I}(s\le p, q\le b)]
    \end{align*}
    therefore
    \begin{align*}
        \mathbb{E}_{U,(s,b)\sim \mathcal{P}}&[\mathbb{P}(s\le U\le p,q\le b)]+\mathbb{E}_{V,(s,b)\sim \mathcal{P}}[\mathbb{P}(s\le p,q\le V\le b)] + \Pro(p,q)= \\
        & = \mathbb{E}_{(s,b)\sim \mathcal{P}}[(p-s)\mathbb{I}(s\le p, q\le b)] + \mathbb{E}_{(s,b)\sim \mathcal{P}}[(b-q)\mathbb{I}(s\le p, q\le b)] + \mathbb{E}_{(s,b)\sim \mathcal{P}}[(q-p)\mathbb{I}(s\le p, q\le b)]\\
        & = \mathbb{E}_{(s,b)\sim \mathcal{P}}[\left((b-q)+(p-s)+(q-p)\right)\mathbb{I}(s\le p, q\le b)]\\
        & = \mathbb{E}_{(s,b)\sim \mathcal{P}}[(b-s)\mathbb{I}(s\le p, q\le b)] = \GFT(p,q).
    \end{align*}
\end{proof}

\LRest*
\begin{proof}
By construction both $\widehat{L}(p,q)$ and $\widehat{R}(p,q)$ are the empirical mean of $N$ unbiased estimators of $L(p,q)$ and $R(p,q)$ respectively, for all $(p,q)\in \mathcal{G}_K$. Hence, by Hoeffding inequality, fixed a $(p,q)$, with probability at least $1-\delta'$ 
\[\left|\widehat{L}(p,q)-L(p,q)\right|\le \sqrt{\frac{\log(\frac{2}{\delta'})}{N}},\]
and similarly, fixed a $(p,q)$, with probability at least $1-\delta'$
\[\left|\widehat{R}(p,q)-R(p,q)\right|\le \sqrt{\frac{\log(\frac{2}{\delta'})}{N}}.\]

To conclude the proof it is sufficient to take the union bound over all events defined above for all prices $(p,q)\in \mathcal{G_K}$, and set $\delta=2K^2\delta'$.
\end{proof}

\section{Omitted Proof from \Cref{sec: bandit}}

\lemmaUCB*

\begin{proof}
    Let $\gamma\in \Delta(\mathcal{G}_K)$ be an arbitrary distribution over the grid $\mathcal{G}_K$ such that
\[
\sum_{(p,q)\in \mathcal{G}_K}\gamma(p,q)\Pro(p,q)\ge 0.
\]

We define the event $\mathcal{E}$ as
\[
\left| \frac{1}{N_t(p,q)}\sum_{\tau=\tau_0}^T \Pro(p,q)\mathbb{I}(p=p_\tau,q=q_\tau) - \Pro(p,q) \right| \le \phi_t(p,q)
\quad \forall (p,q)\in \mathcal{G}_K,\ \forall t \in \{\tau_0,\ldots,T\}.
\]

Since $\Pro(p,q)\in[-1,1]$, by Hoeffding’s inequality and a union bound over all rounds $t$ and all pairs of prices $(p,q)\in \mathcal{G}_K$, it holds that $\mathcal{E}$ occurs with probability at least $1-\delta/4$, with
\[
\phi_t(p,q) = \min\bigg\{1,\sqrt{\frac{2}{N_t(p,q)} \log\left(\frac{8TK^2}{\delta}\right)}\bigg\}.
\]

In particular, by definition 
\[\overline{\Pro}_t(p,q)= \frac{1}{N_t(p,q)}\sum_{\tau=\tau_0}^T \Pro(p,q)\mathbb{I}(p=p_\tau,q=q_\tau)+\phi_t(p,q).\]

Therefore, under the event $\mathcal{E}$, the distribution $\gamma$ belongs to the feasible region of the linear program of Line~13, 
\[
\sum_{(p,q)\in \mathcal{G}_K} \gamma(p,q)\overline{\Pro}_t(p,q)
\ge
\sum_{(p,q)\in \mathcal{G}_K} \gamma(p,q)\Pro(p,q)
\ge 0.
\]

In addition, using the symmetry of the confidence bounds and the optimality of $\gamma_{t+1}$ for the linear program, we obtain
\begin{align*}
\sum_{(p,q)\in \mathcal{G}_K} \gamma_{t+1}(p,q) r(p,q)
&\ge 
\sum_{(p,q)\in \mathcal{G}_K} \gamma_{t+1}(p,q)
\big(\overline{R}(p,q)+\overline{L}(p,q)\big) \\
&\quad + \sum_{(p,q)\in \mathcal{G}_K} \gamma_{t+1}(p,q)
\big(\overline{\Pro}_t(p,q) - 2\phi_t(p,q)\big)  \\
&\ge \sum_{(p,q)\in \mathcal{G}_K}\gamma_{t+1}(p,q) r_t(p,q)
- 2\sum_{(p,q)\in \mathcal{G}_K}\gamma_{t+1}(p,q)\phi_t(p,q)\\
&\ge \sum_{(p,q)\in \mathcal{G}_K}\gamma(p,q) r_t(p,q)
- 2\sum_{(p,q)\in \mathcal{G}_K}\gamma_{t+1}(p,q)\phi_t(p,q)\\
&\ge \sum_{(p,q)\in \mathcal{G}_K}\gamma(p,q) r(p,q)
- 2\sum_{(p,q)\in \mathcal{G}_K}\gamma_{t+1}(p,q)\phi_t(p,q).
\end{align*}

Summing over $t=\tau_0,\ldots,T$, we obtain
\[
\sum_{t=\tau_0}^{T} \sum_{(p,q)\in \mathcal{G}_K} \gamma_{t+1}(p,q) r(p,q)
\ge
\sum_{t=\tau_0}^{T} \sum_{(p,q)\in \mathcal{G}_K} \gamma(p,q) r(p,q)
- 2 \sum_{t=\tau_0}^{T} \sum_{(p,q)\in \mathcal{G}_K} \gamma_{t+1}(p,q)\phi_t(p,q).
\]

By applying the Azuma--Hoeffding inequality, with probability at least $1-\delta/4$,
\begin{align*}
\sum_{t=\tau_0}^{T} \sum_{(p,q)\in \mathcal{G}_K} \gamma_{t+1}(p,q)\phi_t(p,q)
&\le \sum_{t=\tau_0}^{T} \phi_t(p_{t+1},q_{t+1})
+ \sqrt{2\log(8/\delta)|T-\tau_0+1|}\\
&\le \sqrt{2\log\left(\tfrac{8T}{\delta}\right)}
\sum_{t=\tau_0}^{T} \frac{1}{\sqrt{N_t(p_{t+1},q_{t+1})}}
+ \sqrt{2\log(8/\delta)|T-\tau_0+1|}\\
&= \sqrt{2\log\left(\tfrac{8T}{\delta}\right)}
\sum_{t=\tau_0}^{T} \sum_{(p,q)\in \mathcal{G}_K}
\frac{\mathbb{I}\big((p,q)=(p_{t+1},q_{t+1})\big)}{\sqrt{N_t(p,q)}}\\
&\quad + \sqrt{2\log(8/\delta)|T-\tau_0+1|}\\
&\le 2\sqrt{2\log\left(\tfrac{8T}{\delta}\right)|\mathcal{G}_K||T-\tau_0+1|}
+ \sqrt{2\log(8/\delta)|T-\tau_0+1|}%\\
%&\le \widetilde{\mathcal{O}}\!\left(\sqrt{\log(1/\delta)\,K\,(T-\tau_0+1)}\right).
\end{align*}

By applying the Azuma-Hoeffding inequality once more,considering that $r(p,q)\le 3$ for any $(p,q)$, with probability at least $1-\delta/4$,
\[
\sum_{t=\tau_0}^{T} \sum_{(p,q)\in \mathcal{G}_K} \gamma_{t+1}(p,q) r(p,q)
\le \sum_{t=\tau_0}^{T} r(p_{t+1},q_{t+1})
+ 3\sqrt{2\log(8/\delta)|T-\tau_0+1|}.
\]

Finally, again, by applying the Azuma-Hoeffding inequality with probability at least $1-\delta/4$,
\[
\sum_{t=\tau_0}^{T} \sum_{(p,q)\in \mathcal{G}_K} \gamma_{t}(p,q) \Pro(p,q)
\le \sum_{t=\tau_0}^{T} \Pro(p_{t},q_{t})
+ \sqrt{2\log(8/\delta)|T-\tau_0+1|}.
\]

In conclusion, joining everything, with probability at least $1-\delta$,
\begin{align*}
\sum_{t=\tau_0}^{T} \sum_{(p,q)\in \mathcal{G}_K} \gamma(p,q) r(p,q)
- \sum_{t=\tau_0}^{T} r(p_t,q_t)
\le \widetilde{\mathcal{O}}\!\left(K\sqrt{\log(1/\delta)(T-\tau_0+1)}\right),
\end{align*}

and 

\begin{align*}
   \sum_{t=\tau_0}^{T} \Pro(p_t,q_t)&\ge   \sum_{t=\tau_0}^{T} \sum_{(p,q)\in \mathcal{G}_K} \gamma_{t}(p,q) \Pro(p,q) -\sqrt{2\log(8/\delta)|T-\tau_0+1|}\\
   & \ge -1+ \sum_{t=\tau_0+1}^{T} \sum_{(p,q)\in \mathcal{G}_K} \gamma_{t}(p,q) \overline{\Pro}_{t-1}(p,q) - 2\sum_{t=\tau_0}^{T} \sum_{(p,q)\in \mathcal{G}_K} \gamma_{t+1}(p,q)\phi_t(p,q)\\
   & - \sqrt{2\log(8/\delta)|T-\tau_0+1|}\\
   & \ge -1-4\sqrt{2\log\left(\tfrac{8T}{\delta}\right)K^2|T-\tau_0+1|}
- 2\sqrt{2\log(8/\delta)|T-\tau_0+1|} \\
& - \sqrt{2\log\left(8/\delta\right)|T-\tau_0+1|}\\
& \ge -\widetilde{\mathcal{O}}(K\sqrt{T\log(1/\delta)}).
\end{align*}

\end{proof}

\section{Proof of \Cref{thm:final theorem}} \label{app:main}
\maintheo*
\begin{proof}
Let $\tau_1$ and $\tau_2$ , with $\tau_1\le\tau_2\le T$, be the time in which the phase \textsf{Profit-Max} ends and in which the phase \textsf{Simultaneous-Exploration} ends respectively.

Assume the non-trivial case $\tau_2<T$.
    Then, the regret can be decomposed in the following way
    \begin{align*}
        R_T& = \sum_{t=1}^{\tau_1} (OPT-GFT(p_t,q_t))+ \sum_{t=\tau_1+1}^{\tau_2}(OPT-GFT(p_t,q_t)) + \sum_{t=\tau_2+1}^{T}(OPT-GFT(p_t,q_t)).
    \end{align*}

    By \Cref{lemma: regtret profit max} we can bound the first component as 
    \[\sum_{t=1}^{\tau_1} (OPT-GFT(p_t,q_t))\le \mathcal{O}\left(\beta + \frac{T}{K}+\sqrt{KT}\right)\]

    The second component can be bounded using the number of rounds, by \Cref{lemma: stima L R},

    \begin{align*}
         \sum_{t=\tau_1+1}^{\tau_2}(OPT-GFT(p_t,q_t)) \le |\tau_2-\tau_1|\le 2NK.
    \end{align*}

    We can therefore focus on the third component. 
    
    Let $\gamma_K^*$ the distribution over the grid $\mathcal{G}_K$ that achieve expected GFT $\OPT_K$ and expected Profit greater or equal than $0$ , as defined in \Cref{lem: OPTK LP grid}. 
    Then, we can further decompose the third component in 4 components:

    \begin{align*}
        \sum_{t=\tau_2+1}^{T}(OPT-GFT(p_t,q_t))  %= \sum_{t=\tau_2+1}^{T}(\mathbb{E}_{(p,q)\sim \gamma^*}[\GFT(p,q)]-GFT(p_t,q_t))\\
        & = \sum_{t=\tau_2+1}^{T} (OPT-OPT_K) \\
        &\quad+ \sum_{t=\tau_2+1}^{T}\left(\mathbb{E}_{(p,q)\sim\gamma^*_K}[L(p,q)+R(p,q)]-\mathbb{E}_{(p,q)\sim\gamma^*_K}[\overline{L}(p,q)+\overline{R}(p,q)]\right)\\
        & \quad+ \sum_{t=\tau_2+1}^{T} \left(\mathbb{E}_{(p,q)\sim \gamma_K^*}[r(p,q)]- r(p_t,q_t)\right) \\
         &\quad+ \sum_{t=\tau_2+1}^{T}\left((\overline{L}(p_t,q_t)+\overline{R}(p_t,q_t))-(L(p_t,q_t)+R(p_t,q_t))\right)
        %& = \sum_{t=\tau_2+1}^{T}\left(\mathbb{E}_{(p,q)\sim\gamma^*}[L(p,q)+R(p,q)+\Pro(p,q)]-\left(L(p_t,q_t)+R(p_t,q_t)+\Pro(p_t,q_t)\right)\right)\\
        %& = \sum_{t=\tau_2+1}^{T}\left(\mathbb{E}_{(p,q)\sim\gamma^*}[L(p,q)+R(p,q)+\Pro(p,q)]-\mathbb{E}_{(p,q)\sim\gamma_K^*}[L(p,q)+R(p,q)+\Pro(p,q)]\right)\\
        %& + \sum_{t=\tau_2+1}^{T}\left(\mathbb{E}_{(p,q)\sim\gamma_K^*}[L(p,q)+R(p,q)+\Pro(p,q)]- \mathbb{E}_{(p,q)\sim\gamma_K^*}[\overline{L}(p,q)+\overline{R}(p,q)+\Pro(p,q)]\right)\\
        %& + \sum_{t=\tau_2+1}^{T}\left(\mathbb{E}_{(p,q)\sim\gamma_K^*}[\overline{L}(p,q)+\overline{R}(p,q)+\Pro(p,q)]-\overline{L}(p_t,q_t)+\overline{R}(p_t,q_t)+\Pro(p_t,q_t)\right)
    \end{align*}

We can focus on each of the four components separately:
\begin{enumerate}
    \item \textbf{First component}. It represents the approximation error generated by an algorithm that restricts the space of its action to the grid $\mathcal{G}_K$. By \Cref{lem: OPTK LP grid}
    \[\sum_{t=\tau_2+1}^{T} (OPT-OPT_K)\le (T-\tau_2)(\OPT-\OPT_K)\le\widetilde{\mathcal{O}}\left(\frac{T}{K}\right).\]
    \item \textbf{Second component}. The second component is generated by considering the optimistic estimate of $L$ and $R$ instead of their real value to define the optimum in this phase. However, this , with probability at least $1-\delta$ , by \Cref{lemma: stima L R} , is such that 
    \[\sum_{t=\tau_2+1}^{T}\left(\mathbb{E}_{(p,q)\sim\gamma^*_K}[L(p,q)+R(p,q)]-\mathbb{E}_{(p,q)\sim\gamma^*_K}[\overline{L}(p,q)+\overline{R}(p,q)]\right)\le 0\]

    \item \textbf{Third component}. The third component is generated by employing UCB like algorithm to maximize the reward function $r$. By \Cref{lemma: UCB} 
    \[\sum_{t=\tau_2+1}^{T} \left(\mathbb{E}_{(p,q)\sim \gamma_K^*}[r(p,q)]- r(p_t,q_t)\right)\le \widetilde{\mathcal{O}}\left(K\sqrt{T\log(1/\delta)}\right)\]

    \item \textbf{Fourth component}. 
    Similarly to the second component it is generated by employing an optimistic estimate of $L$ and $R$ instead to their true value to build the reward function to maximize. 
    By \Cref{lemma: stima L R}
    \[\sum_{t=\tau_2+1}^{T}\left((\overline{L}(p_t,q_t)+\overline{R}(p_t,q_t))-(L(p_t,q_t)+R(p_t,q_t))\right)\le \widetilde{\mathcal{O}}\left(\frac{T\log(1/\delta)}{\sqrt{N}}\right).\]

\end{enumerate}

 Hence, setting $K=T^{1/4}$, $N=T^{1/2}$ and $\beta= \widetilde{\Theta}(T^{3/4})$, we get that 
with probability at least $1-\delta$
    \[R_T\le \mathcal{O}(T^{3/4}).\]

    To conclude, we prove that the algorithm is Global budget balance, i.e $\sum_{t\in [T]}\Pro(p_t,q_t)\ge 0$.
Indeed, under the same clean event defined by \Cref{lemma: UCB}, with probability at least $1-\delta$
    \begin{align*}
        \sum_{t\in [T]}\Pro(p_t,q_t)& = \sum_{t=1}^{\tau_1} \Pro(p_t,q_t)+ \sum_{t=\tau_1+1}^{\tau_2}\Pro(p_t,q_t) + \sum_{t=\tau_2+1}^{T}\Pro(p_t,q_t)\\
        & \ge \beta - |\tau_2-\tau_1|- \widetilde{\mathcal{O}}\left(K\sqrt{T\log(1/\delta)}\right) \ge 0,
    \end{align*}
    thanks to the fact that the first phase stops only if $\sum_{t=1}^{\tau_1} \Pro(p_t,q_t)\ge \beta$, by \Cref{lemma: UCB} $\sum_{t=\tau_2+1}^{T}\Pro(p_t,q_t)\le \widetilde{\mathcal{O}}(K\sqrt{T\log(1/\delta)})$,  and for a choice of $\beta= \widetilde{\Theta}\left(NK+K\sqrt{T\log(1/\delta)}+\frac{T}{K} \right)= \widetilde{\Theta}(T^{3/4})$.

\end{proof}

\end{document}